\pgfplotsset{width=7cm,compat=1.3}
\newtheorem{lemma}{Lemma}
\newtheorem{theorem}{Theorem}
\newtheorem{definition}{Definition}
\newcommand{\map}[3]{#1: #2 \rightarrow #3}
\newcommand{\signal}{\pi}
\newcommand{\RR}{\mathbb{R}}
\newcommand{\supp}{\mathrm{supp}}
\newcommand{\de}{\mathrm{d}}
\title{Information Design for a Non-atomic Service Scheduling Game}
\author{Nasimeh Heydaribeni, Ketan Savla\thanks{Nasimeh Heydaribeni is with the University of Michigan, Ann Arbor, MI. \texttt{heydari@umich.edu}.}\thanks{Ketan Savla is with the University of Southern California, Los Angeles, CA. \texttt{ksavla@usc.edu}.}}
\date{\today}
\begin{document}

	\maketitle
	\thispagestyle{empty}
	\pagestyle{empty}
	

	\maketitle
	
	\begin{abstract}
		We study an information design problem for a non-atomic service scheduling game. The service starts at a random time and there is a continuum of agent population who have a prior belief about the service start time but do not observe the actual realization of it. The agents want to make decisions of when to join the queue in order to avoid long waits in the queue or not to arrive earlier than the service has started. There is a planner who knows when the service starts and makes suggestions to the agents about when to join the queue through an obedient direct signaling strategy, in order to minimize the average social cost. We characterize the full information and the no information equilibria and we show in what conditions it is optimal for the planner to reveal the full information to the agents. Further, by imposing appropriate assumptions on the model, we formulate the information design problem as a generalized problem of moments (GPM) and use computational tools developed for such problems to solve the problem numerically.
		
	\end{abstract}
	


	\section{Introduction}
	Information asymmetry is inevitable in today's ever-growing systems and networks. Each agent in these systems faces decision makings in the presence of uncertainty toward some states of the world or other agent's information \cite{HeAn19,heydaribeni2020structured}. Having access to as much information as possible enables these agents to make more profitable decisions.  Information design  \cite{kamenica2011bayesian,bergemann2019information} studies how sharing information strategically with the agents can steer their actions towards a desirable direction. 
	In the information design  framework, there is a sender that possesses some private knowledge about the state of the world. There are possibly multiple receivers of the information. The information that the sender shares with the receivers  is shaped in a way to align their objectives with that of the sender as much as possible. 

	There are different types of information design problems depending on whether there are multiple receivers or a single one, whether the system is dynamic or not, whether the receivers have private information or not, etc. The information design problems with a single receiver are  referred to as ``Bayesian persuasion'' as first introduced in \cite{kamenica2011bayesian}, where a geometric method of analyzing the information design problem is proposed. 
	Information design problems with more than one receiver are usually more complex since the solution must induce an   equilibrium between the receivers. It is shown in \cite{bergemann2016bayes} that the set of outcomes in an information design problem with multiple receivers is indeed the set of Bayes-correlated equilibria, BCE, for the receivers.  According to the definition of BCE in \cite{bergemann2016bayes}, the information shared with the receivers contains suggestions of what actions they should take. Therefore, an obedience condition has to be imposed on the strategy of the sender to make sure the receivers will follow the suggestions once they hear them. The obedience condition is the same as the conditions that are imposed in the definition of correlated equilibria. 
	
	Information design problems study dynamic or static systems. In static information design, the problem that the information designer faces is a static optimization problem \cite{kamenica2011bayesian,bergemann2016bayes,bergemann2016information,das2017reducing,tavafoghi2017informational,zhu2018stability}. Dynamic information design problems\cite{ely2017beeps,farhadi2020dynamic,doval2020sequential,lingenbrink2019optimal, sayin2018dynamic,ARXIV-VERSION_HeAn21} deal with dynamic settings and therefore, the  information can be disclosed sequentially. Dynamic programming techniques can therefore  be used to characterize the optimal strategy.

	An example of dynamic information design can be found in  \cite{ely2017beeps}, where the receiver is awaiting the occurrence of a random event, e.g., the arrival of an email, so that she can check her email.  The receiver is informed of the arrival of the email by a beep that is sent by the sender. The sender wants the receiver not to check her email for as long as possible. Therefore,  the sender has to solve a dynamic information design problem to decide whether or not or when to send a beep and reveal the arrival of an email. The problem is solved in continuous time and then a discrete time generalization is presented.

	In this paper, we study an information design problem where there are not only multiple receivers, but they are non-atomic. That is, they form a continuum of population with unit total mass. A service scheduling problem is studied where the service start time is unknown to the agents who want to make decisions of when to join the queue in order to avoid long waits in the queue or not to arrive earlier than the service has started. The service starting time and agents' decisions are in continuous time. There is a planner that knows when the service starts and makes suggestions to the agents about when to join the queue. The suggestion profile has to satisfy the obedience condition. That is, an agent that has received the suggestion of joining at time $t$ must be willing to obey that suggestion. Our model can be considered a dynamic information design problem because the planner makes suggestions for the whole dynamic arrival process of agents. However, each agent only receives one signal from the planner.

	Our model of a continuum of agent population arriving at a queue and their cost function closely follows that of \cite{smith1984existence,daganzo1985uniqueness}. The existence and uniqueness of the equilibrium arrival process is proved in \cite{smith1984existence} and \cite{daganzo1985uniqueness}, respectively.  In these works, the agents have a preference of when to depart the queue while in our model, this preferred time coincides with the time the service starts and is also the same for all of the agents, although they do not know when that time is. This is where the information design aspect of our model plays its role.
	Information design for non-atomic agents has also been studied in \cite{zhu2020semidefinite}, where a routing game has been 
	considered in which the unknown states of the world affect the latency of the links. The problem has been shown to be a generalized problem of moments and a  hierarchy of polynomial optimization is proposed to approximate the solution.

	The contributions of this paper are as follows. We formulate an information design problem for a service scheduling game consisting of non-atomic agents. We characterize the equilibrium in full information and no information extremes. We show some results on when the planner can do no better than revealing the full information to the agents. We impose some assumptions on our model that will allow us to express the information design problem as a generalized problem of moments (GPM) \cite{lasserre2008semidefinite}. We use the computation tools for these problems  
	such as Gloptipoly \cite{henrion2009gloptipoly} to numerically solve the information design problem.

	The rest of the paper is structured as follows. Problem formulation is discussed  in section \ref{model}. In section \ref{infodes}, the obedience condition is defined and simplified.  We study the two extreme cases of full information and no information equilibria in section \ref{extreme}. A structural result is stated in section \ref{structural} for a specific type of  arrival processes. We formulate the problem as a GPM in section \ref{GPM} and we present numerical analysis in section \ref{numerical}. We conclude in section \ref{conclusion}.   \optv{submission}{The proofs of some of the lemmas and theorems can be found in the Appendix at the end of the paper. Due to space limitations, some of the  proofs can be found in the extended version of this paper \cite{ARXIV-VERSION_HeSa21}.}\optv{arxiv}{The proofs of the lemmas and theorems can be found in the Appendix at the end of the paper.}

	\section{Problem Formulation}\label{model}
	A service provider starts its service at a fixed rate $\mu \in (0,1)$ starting at some time $\tau \geq 0$ with probability distribution of $f_{\tau}(\cdot)$. 
	A continuum of agent population of unit total mass needs this service. 
	The action of an agent is the time $t$ to join the service queue. The collection of actions of all the agents can be represented as a probability measure, $m$, on $\RR_{\geq 0}$. Let the set of such measures be denoted as $M$. We usually refer to the measure $m$ as the arrival process.
	Note that the support of $m$ can include negative numbers. That is, the arrival times of agents can be a negative number which is due to the fact that the time origin is considered to be when the service can possibly start. 
	For a given $m \in M$ and $\tau$, we denote the queue length at time $t$ by $q_{\tau,m}(t)$, which is given as follows.
	\begin{align*}
	q_{\tau,m}(t)=\int_{s=-\infty}^t m(s)ds-\mu (t-\tau)^+,
	\end{align*}
	where $(a)^+=\max(a,0)$.
	
	The cost of an agent with action $t \in \supp(m)$ and for a given $\tau$ and $m$ is denoted by $c_{\tau,m}(t)$  and is the weighted sum of (i) time to wait in the queue until receiving service; and (ii) the difference between the time of service and realization of $\tau$. Note that (i) includes the time to wait for the service to start in case $t<\tau$. Part (ii)  is considered to capture the possibility of service quality deterioration by time. For example, the service quality degrades by time in  a food distribution center since the food quality degrades. Part (ii) also captures the fact that agents might be impatient and want to get serviced as soon as possible. 
	Therefore, we have the following cost function.  
	\begin{align*}
	c_{\tau,m}(t)=&c_1(\frac{q_{\tau,m}(t)}{\mu}+(\tau-t)^+) \nonumber \\&+c_2(t+\frac{q_{\tau,m}(t)}{\mu}+(\tau-t)^+-\tau)\\
	\equiv& \frac{q_{\tau,m}(t)}{\mu}+c(\tau-t)^++(1-c)(t-\tau)^+\\
	=&  \frac{q_{\tau,m}(t)}{\mu}+(t-\tau)^+-c(t-\tau),
	\end{align*}
	where $c=\frac{c_1}{c_1+c_2}\leq 1$, and $c_1$ and $c_2$ are the weights of the two parts of the cost function. Without loss of generality, we can assume $c_1$ and $c_2$ are between 0 and 1.
	
	The social cost associated with an arrival process $m$ and a $\tau$ is denoted by $s(\tau,m)$ and  is defined as the sum of costs of all agents, i.e.,  $s(\tau,m):=\int_{t} m(t) c_{\tau,m}(t) \de t.$

	The service rate $\mu$ and the probability distribution of $\tau$, $f_{\tau}(\cdot)$, are common knowledge. The agents do not know the exact realization of $\tau$, but there is a  planner who does. The planner desires to utilize this information asymmetry to minimize expected social cost over all \emph{obedient} direct signaling strategies. A direct signaling strategy is a map $\map{\signal}{\RR_{\geq 0}}{\triangle\, M}$, where $\triangle \, M$ is the set of probability distributions over $M$. That is, for a realization $\tau$, the planner  privately recommends actions to the agents consistent with a measure $m \in M$ sampled from $\signal(.|\tau)$. 	The obedience condition is defined in the next section.
	The objective of the planner is to minimize the average value of the social costs, $\bar{s}(\pi)$, which  is given below.  
	\begin{subequations}
		\begin{align*}
		&\bar{s}(\pi):=\int_{\tau,m}\int_{t} m(t) c_{\tau,m}(t) f_{\tau}(\tau)\pi(m|\tau) \de m \de \tau \de t.
		\end{align*}
	\end{subequations}
	Throughout the paper, we impose different assumptions on the set of arrival processes $M$, to which the designer restricts his attention. In each section, it will be stated which assumption has been considered. Below is the list of these assumptions. 
	
	\textbf{Assumptions:}
	\begin{enumerate}[(a)]
		\item $m(t)\leq \mu, \forall t$. \label{lessmu}
		\item For all $m$ with $\pi(m|\tau)>0$, $m(t)=0$ for $t \notin  [\underline{t}_{\tau},\bar{t}_{\tau}]$ and some $\underline{t}_{\tau}$ and $\bar{t}_{\tau}$ that are increasing with respect to $\tau$. \label{interval}
		\item For all $m$ with $\pi(m|\tau)>0$, if  $q_{\tau,m}(t)=0$ and $m(s)>0$ for some $s>t$ and $s<t$, then $m(t)=\mu$. \label{qpos}
		\item $m$ is piecewise continuous. \label{piece}
	\end{enumerate}
	Note that assumption (\ref{qpos}) is to make sure that the  server  works at its full capacity as long as there is yet agents to arrive. As we will see in section \ref{extreme}, the full information equilibrium arrival process satisfies all of the above assumptions. Further, the no information equilibrium arrival process that satisfies (\ref{piece}), also satisfies assumptions (\ref{lessmu}) and (\ref{interval}).

	\section{Obedience Condition}\label{infodes}
	
	The agent that has received suggestion $t$, will  form her posterior belief on $\tau$ and $m$ which can be used to calculate the average cost of taking action $s$ (arriving at time $s$). We denote this average cost by $\bar{c}_{t,\pi}(s)$. The posterior belief of an agent that has received the suggestion $t$ is given below. 
	\begin{align*}
	\beta(\tau,m|t,\pi)=\frac{ f_{\tau}(\tau)\pi(m|\tau)m(t)}{\int_{\tau,m} f_{\tau}(\tau)\pi(m|\tau)m(t)}.
	\end{align*}

	In order to calculate $\bar{c}_{t,\pi}(s)$, we define a quantity $\tilde{\tau}_m(t)$ as follows.
	For a given arrival process $m$  and each $t\geq 0$, we define $\tilde{\tau}_m(t)\leq t$ as follows.
	\begin{align*}
	\forall \tau\leq \tilde{\tau}_m(t), \quad q_{\tau,m}(t)=0\\
	\forall \tau > \tilde{\tau}_m(t), \quad q_{\tau,m}(t)>0.
	\end{align*}
	Note that there might exist a $t$ for which we have $q_{\tau,m}(t)>0$, for all $\tau$. In this case, we define  $\tilde{\tau}_m(t)=0$. Also, for $t<0$, we define $\tilde{\tau}_m(t)=0$. 
	
	Throughout the paper, except for section \ref{extreme}, we assume $m$ follows assumption (\ref{lessmu}).  As we will see in section \ref{extreme}, both full information and no information equilibria satisfy this assumption.  
	
	The average cost $\bar{c}_{t,\pi}(s)$  is given in the following lemma.
	\begin{lemma}\label{avcost}
		$\bar{c}_{t,\pi}(s)$ which is the average value of the cost for an agent that has received suggestion $t$ through the signaling strategy $\pi$ is given as follows.
		\begin{align*}
		&\bar{c}_{t,\pi}(s)=\mathbb{E}\{c_{\tau,m}(s)|t,\pi\}= 
		\\ &
		\frac{1}{\mu \bar{m}(t)}\int_{
			m, \tau>\tilde{\tau}_m(s)} \hspace{-0.5cm}f_{\tau}(\tau)\pi(m|\tau)m(t)(\int_{l=-\infty}^s \hspace{-0.4cm} m(l)\de l-\mu c s) \de \tau    \de m
		\\& + \frac{1}{ \bar{m}(t)}\int_{m ,\tau<\tilde{\tau}_m(s)}\hspace{-0.4cm}f_{\tau}(\tau)\pi(m|\tau)m(t) ((1-c)s-\tau) \de \tau    \de m \nonumber \\& +c \mathbb{E}(\tau|t),
		\end{align*}
		where $\bar{m}(t)=\int_{\tau,m}f_{\tau}(\tau)\pi(m|\tau)m(t)\de\tau \de m$. 
	\end{lemma}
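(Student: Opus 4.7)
The plan is to compute $\mathbb{E}\{c_{\tau,m}(s)\mid t,\pi\}$ directly via the posterior $\beta(\tau,m\mid t,\pi)=f_\tau(\tau)\pi(m\mid\tau)m(t)/\bar m(t)$, and then simplify the integrand by partitioning the $(\tau,m)$ integration domain into the two regions $\{\tau>\tilde\tau_m(s)\}$ and $\{\tau<\tilde\tau_m(s)\}$. By the definition of $\tilde\tau_m(s)$, these two regions correspond exactly to $q_{\tau,m}(s)>0$ and $q_{\tau,m}(s)=0$, which is what makes the cost $c_{\tau,m}(s)=\frac{q_{\tau,m}(s)}{\mu}+(s-\tau)^+-c(s-\tau)$ simplify cleanly in each.

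First I would handle the region $\tau>\tilde\tau_m(s)$, where $q_{\tau,m}(s)=\int_{-\infty}^s m(l)\,\de l-\mu(s-\tau)^+>0$. The key identity is that the $(s-\tau)^+$ inside $q_{\tau,m}(s)/\mu$ cancels against the separate $(s-\tau)^+$ term, leaving $\frac{1}{\mu}\int_{-\infty}^s m(l)\,\de l-c(s-\tau)=\frac{1}{\mu}(\int_{-\infty}^s m(l)\,\de l-\mu c s)+c\tau$. Next, for $\tau<\tilde\tau_m(s)$, the queue is empty at time $s$ and, since $\tilde\tau_m(s)\le s$ by construction (and in the corner case $s<0$ the region is empty because $\tilde\tau_m(s)=0\le\tau$), we automatically have $\tau<s$; hence the cost reduces to $(1-c)(s-\tau)=((1-c)s-\tau)+c\tau$. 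The crucial observation in both cases is that a common additive $c\tau$ piece appears, which when integrated against the posterior and summed over both regions collapses to $c\,\mathbb{E}(\tau\mid t)$, while the remaining pieces match precisely the two integrals displayed in the lemma.

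Finally I would substitute these two simplified expressions back into $\bar c_{t,\pi}(s)=\frac{1}{\bar m(t)}\int c_{\tau,m}(s)f_\tau(\tau)\pi(m\mid\tau)m(t)\,\de\tau\,\de m$, splitting the integral according to the sign of $\tau-\tilde\tau_m(s)$, and pulling out the $c\tau$ contribution separately to obtain $c\,\mathbb{E}(\tau\mid t)$. The only step that requires any care is the case analysis that justifies the cancellation $\frac{q_{\tau,m}(s)}{\mu}+(s-\tau)^+=\frac{1}{\mu}\int_{-\infty}^s m(l)\,\de l$ when $q_{\tau,m}(s)>0$ --- this is essentially automatic here because the ``$+$'' in $(s-\tau)^+$ is exactly the same ``$+$'' that appears in the definition of $q_{\tau,m}$, but one should verify that the subcase $s<\tau$ inside $\{\tau>\tilde\tau_m(s)\}$ also yields the same formula (it does, since both $(s-\tau)^+$ terms vanish simultaneously). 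With this the three terms in the claimed expression for $\bar c_{t,\pi}(s)$ follow immediately.
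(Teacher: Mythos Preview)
Your proposal is correct and follows essentially the same approach as the paper: split the posterior expectation according to whether $\tau>\tilde\tau_m(s)$ or $\tau<\tilde\tau_m(s)$, use that this dichotomy corresponds to $q_{\tau,m}(s)>0$ versus $q_{\tau,m}(s)=0$, cancel the $(s-\tau)^+$ terms in the first region, use $\tau<s$ in the second, and extract the common $c\tau$ piece as $c\,\mathbb{E}(\tau\mid t)$. If anything, you are slightly more explicit than the paper about the corner cases ($s<\tau$ within the first region, and $s<0$), which the paper leaves implicit.
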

	
	
	As mentioned before,  the planner restricts his attention to the set of obedient   signaling strategies. The definition of the obedience condition is  stated below.
	\begin{definition}[Obedience Condition]
		The signaling strategy $\pi$ is  obedient if all of the agents prefer to arrive at the queue at the time they are recommended to do so. That is
		\begin{align*}
		t \in \arg \min_s \bar{c}_{t,\pi}(s), \quad \forall t.
		\end{align*}
	\label{deff:obed}
	\end{definition}
	Definition \ref{deff:obed} states that $t$ must be a global minimizer of $\bar{c}_{t,\pi}(s)$ for $\pi$ to be obedient. In the next lemma, we will show that for a signaling strategy $\pi$ to be obedient, it is necessary and sufficient for $t$ to be a local minimizer of $\bar{c}_{t,\pi}(s)$.
	\begin{lemma}\label{obedcon}
		The signaling strategy $\pi$ is obedient if and only if $\frac{\de}{\de s}\bar{c}_{t,\pi}(s)|_t=0$ for all   times $t$, which implies the following must hold for an obedient signaling strategy.
		\begin{align*}
		&(1-c)\int_{m}\int_{\tau=0}^{\tilde{\tau}_m(t)}f_{\tau}(\tau)\pi(m|\tau)m(t)\de\tau \de m \nonumber \\& +\frac{1}{\mu}\int_{m} \int_{\tilde{\tau}_m(t)}^{\infty}  f_{\tau}(\tau)\pi(m|\tau)m(t)(m(t)-\mu c)\de\tau  =0.
		\end{align*}
	\end{lemma}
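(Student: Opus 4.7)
The plan is to prove the equivalence in two parts: first, to compute $\frac{\de}{\de s}\bar{c}_{t,\pi}(s)$ explicitly from Lemma~\ref{avcost}, so that setting it to zero at $s=t$ produces the displayed integral equation; and second, to argue that under assumption~(a) the resulting first-order condition is not only necessary but also sufficient for $t$ to globally minimize $\bar{c}_{t,\pi}(\cdot)$.

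For the derivative, I would apply Leibniz's rule to the two integrals in Lemma~\ref{avcost}, whose regions $\{\tau>\tilde{\tau}_m(s)\}$ and $\{\tau<\tilde{\tau}_m(s)\}$ depend on $s$ through the moving endpoint $\tilde{\tau}_m(s)$. The $s$-derivative produces interior contributions from differentiating the integrands---namely $m(s)-\mu c$ in the first and $1-c$ in the second---together with boundary contributions at $\tau=\tilde{\tau}_m(s)$. The key algebraic identity is $q_{\tilde{\tau}_m(s),m}(s)=0$, which rewrites $\int_{-\infty}^s m(l)\,\de l$ as $\mu(s-\tilde{\tau}_m(s))$ and reduces the first integrand at the boundary to $\mu[(1-c)s-\tilde{\tau}_m(s)]$ times the prior weight; after dividing by the $1/\mu$ prefactor of the first integral, this exactly matches the second integrand at the same boundary. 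Because the two regions grow and shrink in opposite directions as $s$ increases---a consequence of $\tilde{\tau}_m'(s)=1-m(s)/\mu\ge 0$ under assumption~(a)---the two boundary contributions cancel. The final $c\,\E(\tau|t)$ term in Lemma~\ref{avcost} is $s$-independent, so it does not contribute. Equating the remaining interior expression at $s=t$ to zero and multiplying by $\bar{m}(t)$ produces the displayed formula, which handles the obedience $\Rightarrow$ FOC direction since $t$ is an interior minimizer of $\bar{c}_{t,\pi}(\cdot)$.

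For the converse FOC $\Rightarrow$ obedience, I would exploit the structure of $c_{\tau,m}(s)$ for each fixed $(\tau,m)$: it is continuous, piecewise-smooth, with slope $m(s)/\mu-c$ on $\{q_{\tau,m}(s)>0\}$ and slope $1-c$ on $\{q_{\tau,m}(s)=0\}$. Assumption~(a) forces the slope to jump upward (or stay constant) across the queue-emptying transition. Combined with the monotone migration of posterior mass from $\{\tau>\tilde{\tau}_m(s)\}$ to $\{\tau<\tilde{\tau}_m(s)\}$ as $s$ grows, this renders $\frac{\de}{\de s}\bar{c}_{t,\pi}(s)$ non-decreasing in $s$, so a stationary point is necessarily a global minimum and FOC at every recommendation $t$ is enough to guarantee obedience.

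The main obstacle I expect is the careful bookkeeping of the moving-boundary terms in the derivative computation: one must verify that the two boundary contributions have matched magnitudes and opposite signs, both of which depend delicately on the identity $q_{\tilde{\tau}_m(s),m}(s)=0$ and on the sign of $\tilde{\tau}_m'(s)$ given by assumption~(a). A secondary subtlety is securing the monotonicity of $\frac{\de}{\de s}\bar{c}_{t,\pi}(s)$ needed for the sufficiency direction, which may implicitly require some regularity of the arrival rate $m$ within its queue-nonempty window beyond what assumption~(a) alone provides.
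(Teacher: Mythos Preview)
Your proposal is correct and follows essentially the same route as the paper: compute $\frac{\de}{\de s}\bar{c}_{t,\pi}(s)$ via Leibniz's rule and use the identity $q_{\tilde{\tau}_m(s),m}(s)=0$ to cancel the moving-boundary terms, then argue that the resulting derivative is non-decreasing in $s$ under assumption~(\ref{lessmu}) so that the first-order condition is both necessary and sufficient. The paper packages the sufficiency direction as a separate convexity lemma argued directly from the explicit derivative formula (mass migrates from the integral weighted by $m(s)/\mu\le 1$ to the one weighted by $1$ as $\tilde{\tau}_m(s)$ increases), whereas you phrase it via pointwise slope jumps of $c_{\tau,m}$ combined with the same mass-migration observation; these are the same mechanism, and the regularity caveat you flag at the end is indeed glossed over in the paper as well.
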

	

	\section{Full Information and No Information Extremes}\label{extreme}
	In this section, we characterize the full information (all agents know the value of $\tau$) and the no information (there is no signal sent to the agents about the value of $\tau$) equilibrium arrival processes. 
	\begin{theorem}[Full Information]\label{fullinfo}
		The full information equilibrium arrival process for the service time $\tau$ is as follows. 
		\begin{align*}
		m(t)=\mu c, \quad t \in (\tau-\frac{1-c}{\mu c},\tau+\frac{1}{\mu}),
		\end{align*}
		and $m(t)=0$ elsewhere. 
	\end{theorem}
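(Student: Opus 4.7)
My plan is to reduce the full-information equilibrium to a Wardrop-type condition that $c_{\tau,m}(t)$ is constant on $\supp(m)$, obtain $m$ on the support from a first-order condition, and pin down the endpoints of the support via boundary conditions.

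With $\tau$ known to every agent, the posterior in Lemma~\ref{avcost} collapses to a point mass at $(\tau,m)$, and the obedience condition of Definition~\ref{deff:obed} reduces to the familiar requirement that $t\in\arg\min_s c_{\tau,m}(s)$ for every $t\in\supp(m)$. A short case analysis (comparing slopes on either side of a putative gap and using the fact that on any gap entirely before $\tau$, or entirely after $\tau$ with positive queue, the cost is strictly monotone in $t$, while on a gap over which the queue empties the cost is strictly convex with a unique interior minimum) shows that $\supp(m)$ must be a single interval $[\underline{t},\bar{t}]$ on which $c_{\tau,m}(t)$ equals a constant $\kappa$. Differentiating
$$c_{\tau,m}(t)=\frac{q_{\tau,m}(t)}{\mu}+(t-\tau)^+-c(t-\tau)$$
on the interior of $[\underline{t},\bar{t}]$, and using $\tfrac{\de}{\de t}q_{\tau,m}(t)=m(t)-\mu\mathbf{1}\{t>\tau\}$, one finds that the $(t-\tau)^+$ and the $\mu$-drain terms cancel for $t>\tau$, so both sides of $\tau$ give the same condition $m(t)/\mu-c=0$. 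Hence $m(t)=\mu c$ on $[\underline{t},\bar{t}]$.

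Three scalar equations now fix $\underline{t}$, $\bar{t}$, and $\kappa$: (i) unit mass, $\mu c(\bar{t}-\underline{t})=1$; (ii) at the left endpoint the queue is still empty and $\underline{t}<\tau$, so $\kappa=c(\tau-\underline{t})$; (iii) at the right endpoint the queue has been drained (the arrival rate $\mu c<\mu$ is exceeded by the service rate after $\tau$), so matching the cost against the $(1-c)(t-\tau)$ that a hypothetical late arrival would incur yields $\kappa=(1-c)(\bar{t}-\tau)$. Solving gives $\kappa=(1-c)/\mu$, $\underline{t}=\tau-(1-c)/(\mu c)$, and $\bar{t}=\tau+1/\mu$, as claimed. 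A brief check that $q_{\tau,m}\ge 0$ on $[\underline{t},\bar{t}]$ (it rises linearly to the peak $1-c$ at $t=\tau$ and decays to $0$ at $\bar{t}$) and that $c_{\tau,m}(t)>\kappa$ outside $[\underline{t},\bar{t}]$ closes the argument.

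The main obstacle I expect is the structural first step: ruling out multi-component supports and atoms requires the monotonicity/convexity case analysis sketched above, since the cost off the support can behave differently depending on whether the gap straddles $\tau$ and on whether the queue empties inside it. Once this Wardrop-style reduction to a single-interval, constant-cost support is in place, the remaining steps are a direct ODE computation and a linear system in three unknowns.
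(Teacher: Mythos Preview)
Your proposal is correct and follows essentially the same route as the paper: derive $m(t)=\mu c$ from the first-order condition $c'_{\tau,m}(t)=0$ on the support, then pin down the endpoints from unit mass together with the queue emptying at the right endpoint. The paper states the right-endpoint condition directly as $q_{\tau,m}(t_2)=0$ (giving $\tau=ct_1+(1-c)t_2$) rather than via your cost-matching $\kappa=(1-c)(\bar t-\tau)$, but these are equivalent, and your single-interval case analysis is if anything more careful than the paper's one-line remark that $t_1\le\tau$ precludes multiple queues.
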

	Note that the full information equilibrium arrival process induces a single queue throughout the whole time horizon and the queue is cleared out at the same time the arrival process is ended.  
	
	Next, we investigate the equilibrium when the agents have no information about $\tau$, other than its prior distribution, $f_{\tau}(\cdot)$. For this part, we restrict our attention to the set of arrival processes that satisfy assumption (\ref{piece}).  Also, we assume $f_{\tau}(\cdot)$  is an exponential distribution with parameter $\lambda$. Note that this assumption is not critical in finding the no information equilibrium and one can search the equilibrium arrival process for a different $f_{\tau}(\cdot)$. As we will see in the proof of Theorem \ref{noinfo}, having a different $f_{\tau}(\cdot)$ will induce a different differential equation to be solved than the one we solve in this paper.  
	
	Before stating the equilibrium, we present the following lemma that will enable us to restrict our attention to a smaller set of arrival processes for the no information equilibrium. 
	\begin{lemma}
		For the no information equilibrium arrival process, $m$,  we  have $m(t)
		\leq \mu$ for all $t$ and $m$ can not include a delta function.
		\label{mgrmu}
	\end{lemma}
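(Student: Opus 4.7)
To show $m(t) \le \mu$, I would derive the obedience first-order condition directly in the no-information setting. Here $\pi(m\mid\tau)$ collapses onto a single arrival process $m$ independent of $\tau$, so the posterior equals the prior and $\bar c_{t,\pi}(s)$ reduces to $\bar c(s) := \mathbb{E}_\tau[c_{\tau,m}(s)]$. I would split the $\tau$-integral at $\tilde\tau_m(s)$: for $\tau \le \tilde\tau_m(s)$ the queue has cleared and the realised cost is $(1-c)(s-\tau)$, while for $\tau > \tilde\tau_m(s)$ the formula $q_{\tau,m}(s)/\mu + c(\tau-s)^+ + (1-c)(s-\tau)^+$ applies. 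Differentiating in $s$, the boundary contributions from the moving endpoint cancel exactly as in the proof of Lemma~\ref{obedcon} (both integrands agree with $(1-c)(s-\tilde\tau_m(s))$ at $\tau = \tilde\tau_m(s)$), and the interior derivative in the second piece equals $m(s)/\mu - c$ uniformly on both sub-cases $\tilde\tau_m(s) < \tau \le s$ and $\tau > s$. This yields $\tfrac{d}{ds}\bar c(s) = (1-c)\,F_\tau(\tilde\tau_m(s)) + (1 - F_\tau(\tilde\tau_m(s)))(m(s)/\mu - c)$. Setting this to zero on $\supp(m)$ via Definition~\ref{deff:obed} and solving gives $m(s) = \mu(c - F_\tau(\tilde\tau_m(s)))/(1 - F_\tau(\tilde\tau_m(s))) \le \mu c \le \mu$; outside the support $m = 0$ trivially. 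Crucially, this calculation does not presuppose assumption~(\ref{lessmu}), so it is legitimate inside Section~\ref{extreme}.

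For the absence of a delta, I would argue by contradiction. Suppose $m$ places a positive mass $\delta > 0$ at some instant $t_0$. Interpreting the delta as the limit of a uniform density $\delta/\eta$ on $[t_0-\eta, t_0]$ as $\eta \to 0^+$, the expected queueing cost for the agent recommended $t_0$ exceeds that of a would-be arrival at $t_0^-$ by $\delta/(2\mu)$. Consider the unilateral deviation to $s = t_0 - \epsilon$ with $\epsilon > 0$ small: a direct comparison of $c_{\tau,m}(t_0)$ and $c_{\tau,m}(t_0 - \epsilon)$ for fixed $\tau$ shows that the queueing jump $\delta$ survives the cancellation between $q_{\tau,m}(\cdot)/\mu$ and $(s-\tau)^+$, while the continuous part of $m$ and the linear term $-c(s-\tau)$ contribute only $O(\epsilon)$. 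Hence $c_{\tau,m}(t_0) - c_{\tau,m}(t_0-\epsilon) = \delta/(2\mu) + O(\epsilon)$ uniformly in $\tau$, and averaging preserves the sign, so the deviation strictly lowers the expected cost for $\epsilon$ sufficiently small, violating obedience.

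The main obstacle is making the delta-deviation step rigorous: the paper's cost $c_{\tau,m}$ is written for a density $m$, and the value at a jump is implicit. The vanishing-width-interval limit above simultaneously pins down the within-delta service convention and the constant $\delta/(2\mu)$; once that is granted, both halves of the lemma reduce to routine first-order comparisons. The bound $m \le \mu$ falls out of the zero-derivative condition, and the no-delta conclusion from a localised perturbation around $t_0$, so the two claims are established by disjoint arguments.
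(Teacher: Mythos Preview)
Your proof is correct, and the no-delta half is essentially the paper's argument: a deviation to $t_0-\epsilon$ saves the atom's contribution to the queueing term while the remaining terms move only by $O(\epsilon)$. The constant you obtain, $\delta/(2\mu)$, differs from the paper's $a/\mu$ only because you average over positions within the atom via the vanishing-width limit, whereas the paper takes the convention $q_{\tau,m}(t_0)=\int_{-\infty}^{t_0} m$ so that the agent at $t_0$ sees the entire mass ahead; either constant is positive, so the conclusion is unaffected.

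For the bound $m\le\mu$ you take a genuinely different route. The paper argues by contradiction: if $m(t)>\mu$ at a continuity point, then by piecewise continuity $m>\mu$ on a neighborhood, which forces $q_{\tau,m}(s)>0$ there for \emph{every} $\tau$, so the first-order condition collapses to $\bar c'(t)=m(t)/\mu-c=0$, giving $m(t)=\mu c<\mu$. You instead write down the full first-order condition $(1-c)F_\tau(\tilde\tau_m(s))+(1-F_\tau(\tilde\tau_m(s)))(m(s)/\mu-c)=0$ and solve it, obtaining the sharper bound $m(s)\le\mu c$ in one stroke. Your route is cleaner and yields more, but it puts weight on the claim that this derivative formula holds without already knowing $m\le\mu$. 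That claim is true --- the matching of the two integrands at $\tau=\tilde\tau_m(s)$ lets you differentiate under the $\tau$-integral directly, bypassing any need for monotonicity or differentiability of $\tilde\tau_m$ --- but note that the paper's own derivation of this formula (Lemma~\ref{obedcon} via Lemma~\ref{taucont}) \emph{does} invoke $m\le\mu$, so pointing to that proof ``as in Lemma~\ref{obedcon}'' is not by itself a non-circular justification. The paper's contradiction argument avoids this delicacy by working entirely in the regime $q>0$, where the cost is explicit and the derivative is immediate.
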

	Using Lemma \ref{mgrmu}, we can characterize the  no information equilibrium arrival process.
	\begin{theorem}[No Information]\label{noinfo}
		The no information equilibrium arrival process, if it exists, is as follows.
		\begin{align*}
		m(t)=\mu- \frac{\mu}{\beta-\lambda t}, \quad t \in [t_1,t_2],
		\end{align*}
		where $\beta=-\ln(1-c)+\frac{\lambda}{\mu}+1$, $t_2=\frac{-\ln(1-c)}{\lambda}+\frac{1}{\mu}$, and $t_1$ is derived from either of the following equations (or possibly both, which results in two solutions for the equilibrium).
		\begin{align*}
		&	\ln(1-c)+\lambda t_1+\ln(\frac{\lambda}{\mu}-\ln(1-c)-\lambda t_1+1 )=0,\    t_1\geq 0\\
		&t_1=\frac{1-c}{\lambda c}\ln(1-c)-\frac{1-c}{\mu c}+\frac{1}{\lambda}, \quad t_1<0.
		\end{align*}
	\end{theorem}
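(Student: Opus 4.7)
The plan is to specialize Lemma~\ref{obedcon} to the no-information signaling strategy, in which $\pi(m|\tau)$ is a point mass on a single arrival process $m_0$ so that the posterior over $\tau$ coincides with the exponential prior. Using Lemma~\ref{mgrmu} to restrict attention to $m \le \mu$ and evaluating the $\tau$-integrals against $\lambda e^{-\lambda \tau}$, the obedience condition collapses (wherever $m(t)>0$) to the algebraic identity
\[
e^{-\lambda \tilde{\tau}_m(t)} \;=\; \frac{1-c}{\,1-m(t)/\mu\,}.
\]

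First I would treat the ``interior'' regime where $\tilde{\tau}_m(t)>0$, so that $\tilde{\tau}_m(t) = t - M(t)/\mu$ with $M(t) = \int_{-\infty}^t m$. Taking $d/dt$ of the logarithm of the above identity eliminates $M$ and produces the ODE $\lambda(1-m/\mu)^2 = -m'/\mu$, which integrates by separation of variables to $m(t) = \mu - \mu/(K - \lambda t)$ for a constant $K$ to be determined by boundary matching.

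Next I would pin down $K$ and $t_2$ from the right endpoint. For $s > t_2$, $m\equiv 0$ and $M\equiv 1$, so $\tilde{\tau}_m(s) = s - 1/\mu$; using Lemma~\ref{avcost} to compute $\bar{c}_{t_2,\pi}'(t_2^{\pm})$ and requiring $\bar{c}'(t_2^-)\le 0 \le \bar{c}'(t_2^+)$ for $t_2$ to minimize $\bar{c}_{t_2,\pi}$, the right inequality forces $t_2 \ge -\ln(1-c)/\lambda + 1/\mu$ while the left forces the reverse. Equality in both gives $t_2 = -\ln(1-c)/\lambda + 1/\mu$ and $m(t_2^-)=0$, hence $K - \lambda t_2 = 1$ and $K = \beta = \lambda/\mu - \ln(1-c) + 1$. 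For $t_1$ I would then split on its sign. If $t_1 \ge 0$, $M(t_1)=0$ yields $\tilde{\tau}_m(t_1)=t_1$, and plugging the ODE expression for $m(t_1)$ into the obedience identity gives $\ln(1-c) + \lambda t_1 + \ln(\beta - \lambda t_1) = 0$ directly. If $t_1 < 0$, then for $t\in[t_1,0)$ the convention $\tilde{\tau}_m(t)\equiv 0$ collapses the obedience condition to $m(t)=\mu c$; this flat piece persists up to the crossover $s^*>0$ at which $t - M(t)/\mu$ first becomes nonnegative, determined by $\mu c(s^*-t_1) = \mu s^*$. Matching the flat value $\mu c$ to the ODE branch at $s^*$ yields $K-\lambda s^* = 1/(1-c)$; combined with $K=\beta$ from the right endpoint, the stated closed form for $t_1$ follows after routine algebra.

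The most delicate step will be the right boundary: Lemma~\ref{obedcon} only enforces stationarity of $\bar{c}_{t,\pi}$, but because $m$ jumps from a positive value down to $0$ at $t_2$, stationarity alone leaves a one-parameter family and I must supplement it with the explicit one-sided comparison against deviations to $s>t_2$ in order to pin both $K$ and $t_2$. A secondary subtlety is the Case 2 matching at $s^*$, where the functional form of the obedience identity switches depending on whether $\tilde{\tau}_m(t)$ sits at the floor $0$ or is strictly positive, and $s^*$ itself depends on the unknown $m$.
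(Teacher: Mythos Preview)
Your proposal is correct and follows essentially the same route as the paper: derive the obedience identity $e^{-\lambda\tilde\tau_m(t)}\bigl(1-m(t)/\mu\bigr)=1-c$, differentiate to obtain the separable ODE, and fix the constants from boundary data. The only presentational differences are (i) at $t_2$ you use the one-sided derivative inequalities $\bar c'(t_2^-)\le 0\le\bar c'(t_2^+)$ where the paper argues directly that $m$ must be continuous when transitioning to zero, and (ii) in the $t_1\ge 0$ case you pin down $t_1$ by evaluating the obedience identity at $t_1$ (with $\tilde\tau_m(t_1)=t_1$) whereas the paper uses the unit-mass constraint $\int_{t_1}^{t_2}m=1$; a short computation shows these two conditions coincide once $\beta$ and $t_2$ are already fixed, so both routes land on the same closed form.
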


	\section{Structural Results}\label{structural}
	In this section, we assume that the planner restricts her attention to a set of arrival processes that satisfy  assumption (\ref{interval}) and for such strategies, we present a structural property in the next theorem. 

	\begin{theorem}\label{fulinfointerval}
		If a signaling strategy $\pi(\cdot|\tau)$ that satisfies assumption (\ref{interval}) is obedient and if we assume $\bar{t}_{\tau}-\underline{t}_{\tau}\leq \frac{1}{\mu c}$ and $c\leq 0.5$,
		then, $\pi(\cdot|\tau)$ is supported only over the full information equilibrium arrival process characterized in Theorem \ref{fullinfo}.
	\end{theorem}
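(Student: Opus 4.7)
The plan is to invoke Lemma~\ref{obedcon} pointwise in $t$, integrate the resulting identity over $t \in \RR$, and exploit Cauchy--Schwarz together with Jensen's inequality and the hypothesis $c \leq 1/2$ to force the shape of every arrival process in the support of $\pi(\cdot|\tau)$. Let $P_{m,\tau} = \{t : q_{\tau,m}(t) > 0\}$ denote the queue-positive set and $p_{m,\tau} = \int_{P_{m,\tau}} m(t)\, dt$ the fraction of agents arriving during queue-positive time. Interchanging orders of integration in the identity from Lemma~\ref{obedcon}, and using $\int m(t)\, dt = 1$ together with $\int f_\tau(\tau)\pi(m|\tau)\, d\tau\, dm = 1$, yields after simplification
\begin{equation*}
Q := \int f_\tau(\tau)\pi(m|\tau)\,p_{m,\tau}\,d\tau\,dm = (1-c) + \frac{1}{\mu}\int f_\tau(\tau)\pi(m|\tau) \left(\int_{P_{m,\tau}} m(t)^2\,dt\right) d\tau\,dm.
\end{equation*}

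Applying Cauchy--Schwarz, $\int_{P_{m,\tau}} m(t)^2\, dt \geq p_{m,\tau}^2 / |P_{m,\tau}\cap\supp(m)|$, and using assumption~(\ref{interval}) together with $\bar{t}_\tau - \underline{t}_\tau \leq 1/(\mu c)$ to bound the denominator by $1/(\mu c)$, gives $\int m^2 \geq \mu c\, p_{m,\tau}^2$. Substituting into the expression for $Q$ yields $Q \geq (1-c) + c R$ with $R = \int f_\tau\pi\, p_{m,\tau}^2\, d\tau\, dm$, and Jensen's inequality gives $R \geq Q^2$, so $cQ^2 - Q + (1-c) \leq 0$. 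The two roots of this quadratic are $1$ and $(1-c)/c$; the hypothesis $c \leq 1/2$ ensures $(1-c)/c \geq 1$, and since $p_{m,\tau} \leq 1$ forces $Q \leq 1$, we must have $Q = 1$. Tracing the equality case back up the chain: Jensen equality forces $p_{m,\tau} = 1$ almost surely, so $\supp(m) \subseteq P_{m,\tau}$; Cauchy--Schwarz equality forces $m$ constant on $P_{m,\tau}\cap\supp(m)$; and the length bound must be tight, giving $|\supp(m)| = 1/(\mu c)$ and $m \equiv \mu c$ on $[\underline{t}_\tau, \bar{t}_\tau]$, which then coincides with $\supp(m)$ up to a null set. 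In particular, $\pi(\cdot|\tau)$ is a Dirac mass on this arrival process.

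It remains to fix the position of $[\underline{t}_\tau, \bar{t}_\tau]$. Direct analysis of $q_{\tau,m}$ for $m \equiv \mu c$ on the interval shows that the constraint $\supp(m) \subseteq P_{m,\tau}$ combined with $\bar{t}_\tau - \underline{t}_\tau = 1/(\mu c)$ and $\int m = 1$ enforces $\underline{t}_\tau \leq \tau - (1-c)/(\mu c)$, because otherwise the queue would drain to zero strictly inside the arrival interval, contradicting $p_{m,\tau} = 1$. For the reverse inequality I invoke the full global-minimum form of obedience in Definition~\ref{deff:obed}: if $\underline{t}_\tau < \tau - (1-c)/(\mu c)$, then $\bar{t}_\tau < \tau + 1/\mu$, and a direct cost computation shows $\bar{c}_{t,\pi}(\tau + 1/\mu)$ is strictly less than $\bar{c}_{t,\pi}(t)$ at any recommended $t \in [\underline{t}_\tau, \bar{t}_\tau]$, contradicting obedience. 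Equating the two bounds gives $\underline{t}_\tau = \tau - (1-c)/(\mu c)$ and $\bar{t}_\tau = \tau + 1/\mu$, matching Theorem~\ref{fullinfo}. The main obstacle I anticipate is carrying out the equality-case analysis cleanly in the second step, and then handling the possibly non-degenerate posterior over $\tau$ given $t$ in the third step when making the global-minimum comparison.
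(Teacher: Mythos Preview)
Your integration-plus-inequalities argument for the shape of $m$ is correct and reaches the same intermediate conclusion as the paper (every $m$ in the support of $\pi(\cdot|\tau)$ equals $\mu c$ on $[\underline t_\tau,\bar t_\tau]$, the interval has length exactly $1/(\mu c)$, and the queue stays positive throughout), but by a genuinely different route. The paper writes $m(t)=\mu c+\delta(t)$, substitutes directly into the integrated first-order identity from Lemma~\ref{obedcon}, and rearranges the result as a sum of four integrals each of which is manifestly nonnegative---the term $\mu(1-2c)\,m(t)\,\mathbf 1(t\geq\tilde t_m(\tau))$ is where $c\leq 0.5$ enters---so that each must vanish separately. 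Your Cauchy--Schwarz/Jensen approach compresses the same information into the quadratic inequality $cQ^2-Q+(1-c)\leq 0$ in the single scalar $Q=\E[p_{m,\tau}]$, which perhaps makes the role of the hypothesis $c\leq 0.5$ more visible (it orders the roots $1$ and $(1-c)/c$) at the price of a longer equality-case chase. Both arguments are sound.

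The obstacle you flag in the positioning step is real, and your proposed fix does not work as written: the deviation to ``$\tau+1/\mu$'' is not available to an agent who only sees the recommendation $t$ and holds a nondegenerate posterior over $\tau$. (The paper's own proof simply asserts the position from queue-positivity, which in fact only yields the one-sided bound $\underline t_\tau\leq\tau-(1-c)/(\mu c)$, so this gap is shared.) A local version of your idea does work, however: if $\underline t_{\tau^*}<\tau^*-(1-c)/(\mu c)$ for some $\tau^*$, then $q_{\tau^*,m_{\tau^*}}(\bar t_{\tau^*})>0$; setting $t^*=\bar t_{\tau^*}$ and differentiating $\bar c_{t^*,\pi}(s)$ at $s=t^{*+}$ via \eqref{cprime}, every $\tau$ in the posterior with $\bar t_\tau<s$ contributes $0$ to the bracketed expression (since $m_\tau(s)=0$ while $q_{\tau,m_\tau}(s)>0$) rather than $c$, forcing the right-derivative strictly negative and contradicting obedience in the sense of Definition~\ref{deff:obed}. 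Under mild regularity of $\tau\mapsto\bar t_\tau$ this set carries positive posterior mass for every $s>t^*$, which closes the argument.
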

	Note that the interval $\frac{1}{\mu c}$ is the time span of the equilibrium arrival process  in the full information case.  Theorem \ref{fulinfointerval} indicates that if the planner wants to induce a lower social cost than the full information equilibrium social cost, he should expand the time span of the arrival processes to intervals longer than $\frac{1}{\mu c}$. 
	\section{GPM Formulation}\label{GPM}
	In this section, we formulate our problem as a generalized problem of moments (GPM). A GPM is an optimization problem over finite probability measures that minimizes a cost that is linear in moments w.r.t. those measures, subject to constraints that are linear w.r.t. those moments.  The GPM formulation will allow us to utilize the computation tools available for such problems to do numerical analysis for our model. In order to express our problem as a GPM,  we  impose two assumptions (\ref{interval}) and (\ref{qpos})  on the set of arrival processes. 
	
	
	
	
	We  define $\underline{\tau}(t)$ and $\bar{\tau}(t)$ to be the inverse of $\underline{t}_{\tau}$ and $\bar{t}_{\tau}$, respectively. That is, for $\tau<\underline{\tau}(t)$ or $\tau> \bar{\tau}(t)$, we have $m(t)=0$ for all $m$ with $\pi(m|\tau)>0$.

	The obedience condition is simplified  in the next lemma. 
	\begin{lemma}
		\label{obedGPM}
		A signaling strategy $\pi$ that satisfies assumptions (\ref{interval}) and (\ref{qpos}) is obedient  iff the following holds.
		\begin{align*}
		&\int_{\underline{\tau}(t)}^{\bar{\tau}(t)}f_{\tau}(\tau)R_{m,\tau}(t,t) \de\tau =\mu c \int_{\underline{\tau}(t)}^{\bar{\tau}(t)}f_{\tau}(\tau)\bar{m}_{\tau}(t) \de\tau,
		\end{align*}
		where we denote $\bar{m}_{\tau}(t)=\int_{m}\pi(m|\tau)m(t)\de m$ and $R_{m,\tau}(t,s)=\int_{m}\pi(m|\tau)m(t)m(s) \de m$.
	\end{lemma}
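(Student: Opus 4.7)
The plan is to transform the obedience condition from Lemma \ref{obedcon} into the stated identity by using assumptions (\ref{interval}) and (\ref{qpos}) to merge the two regimes $\tau\le\tilde\tau_m(t)$ and $\tau>\tilde\tau_m(t)$ into a single integral over all $\tau$, and then restrict the $\tau$-range by (\ref{interval}). Since each manipulation is an equivalence, the argument will establish the ``iff'' claim.

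I would first multiply Lemma \ref{obedcon}'s condition through by $\mu$, so that the two summands read $\mu(1-c)m(t)$ and $m(t)(m(t)-\mu c)$ respectively. The key observation is that on the domain of the first summand, namely $\tau\le\tilde\tau_m(t)$ with $m(t)>0$, assumption (\ref{qpos}) forces $m(t)=\mu$: by definition of $\tilde\tau_m(t)$ we have $q_{\tau,m}(t)=0$, and by (\ref{interval}) together with $m(t)>0$, the point $t$ lies in the interval $[\underline{t}_\tau,\bar{t}_\tau]$ with arrivals on both sides (away from a measure-zero boundary set), so (\ref{qpos}) applies. Substituting $m(t)=\mu$ shows that $\mu(1-c)m(t)=\mu^2(1-c)=m(t)(m(t)-\mu c)$ on this domain, i.e.\ the two integrands \emph{coincide} wherever $m(t)>0$. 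Consequently, the two integrals combine into
\begin{align*}
\int_m\!\int_0^\infty\! f_\tau(\tau)\pi(m|\tau)\,m(t)(m(t)-\mu c)\,\de\tau\,\de m = 0.
\end{align*}

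Finally, I would use (\ref{interval}) to restrict the $\tau$-integration to $[\underline\tau(t),\bar\tau(t)]$, since by definition of the inverse bounds $m(t)=0$ outside this range for any $m$ with $\pi(m|\tau)>0$; then exchange the order of integration and substitute the definitions of $\bar m_\tau(t)=\int_m\pi(m|\tau)m(t)\,\de m$ and $R_{m,\tau}(t,t)=\int_m\pi(m|\tau)m(t)^2\,\de m$. Rearranging the $\mu c$-term to the right-hand side yields the stated identity. The main obstacle is the careful application of (\ref{qpos}): one must argue that the set of $(m,\tau)$ for which $m(t)>0$ but $t$ fails to have arrivals on both sides is negligible under $f_\tau(\tau)\pi(m|\tau)\,\de\tau\,\de m$. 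This follows from the strict monotonicity of $\underline{t}_\tau,\bar{t}_\tau$ in $\tau$ implicit in (\ref{interval}), so that any coincidence of $t$ with a support boundary occurs on at most a single $\tau$-value.
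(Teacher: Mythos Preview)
Your argument is correct, and it reaches the same conclusion as the paper, but the route differs in a useful way. The paper does not manipulate the two-term obedience condition of Lemma~\ref{obedcon}; instead it goes back one step and recomputes $\bar c_{t,\pi}(s)$ from scratch, observing that under assumptions (\ref{interval}) and (\ref{qpos}) the raw queue expression $q_{\tau,m}(s)=\int_{-\infty}^{s}m(l)\,\de l-\mu(s-\tau)^{+}$ is valid for all $s$ in the support of $m$, so the split into $\tau\lessgtr\tilde\tau_m(s)$ never arises. The $(s-\tau)^{+}$ terms then cancel against the cost, leaving a single expression whose derivative at $s=t$ gives the stated identity directly. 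Your approach instead keeps the two regimes from Lemma~\ref{obedcon} and shows they \emph{collapse}: on the set $\{\tau\le\tilde\tau_m(t),\,m(t)>0,\,\pi(m|\tau)>0\}$ assumption (\ref{qpos}) forces $m(t)=\mu$, and one checks $\mu(1-c)m(t)=m(t)(m(t)-\mu c)$ there. This is slightly more algebraic but has the virtue of making the role of (\ref{qpos}) completely explicit; the paper's version hides it inside the claim that the queue formula remains valid. Your handling of the boundary set (where $t$ coincides with $\underline{t}_\tau$ or $\bar t_\tau$) is also more careful than the paper's, which simply ignores it; note however that assumption (\ref{interval}) only says ``increasing,'' so your appeal to strict monotonicity is reading slightly more into the hypothesis than is literally stated.
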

	

	One can easily see that the full information signaling strategy, i.e., $\pi(m|\tau)=\mathbf{1}(m=m^F_{\tau})$, where $m^F_{\tau}$ is the full information equilibrium characterized in Theorem \ref{fullinfo}, satisfies the above obedience constraints.

	We can also simplify the planner's objective as follows. 
	\begin{lemma}\label{objGPM}
		The planner's objective is given below if he restricts his attentions to the signalling strategies that satisfy assumptions (\ref{interval}) and (\ref{qpos}).
		\begin{align*}
		\bar{s}(\pi)&=\frac{1}{\mu}\int_{\tau}f_{\tau}(\tau)(\int_{t=\underline{t}_{\tau}}^{\bar{t}_{\tau}}\int_{s=\underline{t}_{\tau}}^t(R_{m,\tau}(t,s)-\mu c\bar{m}_{\tau}(t))\de s \de t\nonumber \\&\hspace{2.2cm}+\mu c (\tau-\underline{t}_{\tau}))\de\tau
		%
		\end{align*}
	\end{lemma}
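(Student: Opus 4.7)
My approach is a direct algebraic reduction: simplify the per-realization cost $s(\tau,m)=\int m(t) c_{\tau,m}(t)\,\de t$, then take the conditional expectation under $\pi(\cdot|\tau)$ and the outer expectation under $f_\tau$. The key initial observation is that substituting $q_{\tau,m}(t)=\int_{-\infty}^{t} m(s)\,\de s - \mu(t-\tau)^+$ into $c_{\tau,m}(t)=q_{\tau,m}(t)/\mu+(t-\tau)^+-c(t-\tau)$ causes the two $(t-\tau)^+$ contributions to cancel, leaving the linear-in-history expression
\[c_{\tau,m}(t)=\frac{1}{\mu}\int_{-\infty}^{t} m(s)\,\de s - c(t-\tau).\]

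Next, using assumption (\ref{interval}), the support of $m$ lies in $[\underline{t}_\tau,\bar{t}_\tau]$, so the $t$-integration and the inner $s$-integration may be restricted to this window. I would then decompose the linear term via $t-\tau = (t-\underline{t}_\tau) - (\tau-\underline{t}_\tau)$; since $m$ is a unit-mass probability measure on $\Real_{\geq 0}$, $\int m(t)\,\de t = 1$ and the constant piece contributes $+c(\tau-\underline{t}_\tau)$. Rewriting $t-\underline{t}_\tau=\int_{\underline{t}_\tau}^{t}\de s$ converts the remaining single integral into the desired double integral, yielding
\[s(\tau,m)=\frac{1}{\mu}\int_{\underline{t}_\tau}^{\bar{t}_\tau}\!\int_{\underline{t}_\tau}^{t}\bigl(m(t)m(s)-\mu c\, m(t)\bigr)\de s\,\de t + c(\tau-\underline{t}_\tau).\]

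Finally, integrating against $\pi(m|\tau)$ and applying Fubini, the bilinear term $m(t)m(s)$ averages to $R_{m,\tau}(t,s)$ and $m(t)$ averages to $\bar{m}_\tau(t)$ by definition; factoring $1/\mu$ from both terms and then integrating against $f_\tau(\tau)$ produces precisely the stated formula. The main obstacle is purely bookkeeping: keeping the limits of integration consistent under the support restriction and correctly tracking the boundary term $c(\tau-\underline{t}_\tau)$, which becomes $\mu c(\tau-\underline{t}_\tau)$ once the $1/\mu$ prefactor is pulled out. Assumption (\ref{qpos}) is not invoked in the algebra itself; it is part of the context and underpins the physical validity of the queue-length formula throughout the support of $m$.
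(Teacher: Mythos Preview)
Your proposal is correct and follows essentially the same route as the paper: substitute the queue expression so that the $(t-\tau)^+$ terms cancel, arrive at $c_{\tau,m}(t)=\frac{1}{\mu}\int_{-\infty}^{t}m(s)\,\de s - c(t-\tau)$, then restrict the support via assumption~(\ref{interval}) and average over $\pi(\cdot|\tau)$ and $f_\tau$. The paper jumps from $\int m(t)\bigl(\tfrac{1}{\mu}\int_{-\infty}^{t}m(l)\,\de l + c(\tau-t)\bigr)\de t$ to the final double-integral form in one step, whereas you spell out the decomposition $t-\tau=(t-\underline{t}_\tau)-(\tau-\underline{t}_\tau)$ and the rewriting $t-\underline{t}_\tau=\int_{\underline{t}_\tau}^{t}\de s$; that extra detail is helpful bookkeeping but not a different idea.
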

	According to lemmas \ref{obedGPM} and \ref{objGPM}, the planner's objective is linear in  moments of $m$ with respect to the measure $\pi(m|\tau)$. Also, the obedience condition is linear in moments of $m$. However, $m$ is supported over real numbers and therefore, the measure $\pi$ is not a finite measure. But for a problem to be a GPM, we must have finite probability measures. In order to have a finite measure, we need to discretize the time and consider a discretized version of the optimization problem, which is a GPM.   Therefore,  we can numerically solve it using the computation tools available for these types of problems such as Gloptipoly \cite{henrion2009gloptipoly}.
	In the next section, we will present these numerical results.
	
	Next lemma presents a result similar to one presented in Theorem \ref{fulinfointerval} for the signaling strategies that satisfy assumptions (\ref{interval}) and (\ref{qpos}).
	
	\begin{theorem}
		\label{fulinfores}
		If a signaling strategy $\pi(\cdot|\tau)$ that satisfies assumptions (\ref{interval}) and (\ref{qpos}) is obedient and if we assume $\bar{t}_{\tau}-\underline{t}_{\tau}\leq \frac{1}{\mu c}$ 
		then $\pi(\cdot|\tau)$ is supported only over the full information equilibrium arrival process characterized in Theorem \ref{fullinfo}.
	\end{theorem}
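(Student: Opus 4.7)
The plan is to combine the integrated obedience condition of Lemma~\ref{obedGPM} with a Cauchy-Schwarz argument to force $m(t)=\mu c$ on a support of length exactly $1/(\mu c)$, and then to invoke assumption~(\ref{qpos}) together with the global-minimum form of obedience in Definition~\ref{deff:obed} to pin this support to the full-information interval.

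First, I would integrate the pointwise identity from Lemma~\ref{obedGPM} over all $t \in \RR$. By Fubini and the fact that $\int_{\underline{t}_\tau}^{\bar{t}_\tau}\bar{m}_\tau(t)\,\de t = 1$ (each arrival process is a probability measure),
\[
\int_\tau f_\tau(\tau)\int_m \pi(m|\tau)\int m(t)^2\,\de t\,\de m\,\de\tau \;=\; \mu c.
\]
Cauchy-Schwarz on $m$ restricted to $[\underline{t}_\tau,\bar{t}_\tau]$, which has length $L_\tau \leq 1/(\mu c)$, gives $\int m(t)^2\,\de t \geq 1/L_\tau \geq \mu c$, with equality iff $m$ is constant on its support and $L_\tau = 1/(\mu c)$. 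The overall equality then forces, for $f_\tau\otimes\pi$-a.e.\ $(\tau,m)$, that $m(t) = \mu c\,\mathbf{1}(t\in[\underline{t}_\tau,\bar{t}_\tau])$ with $\bar{t}_\tau-\underline{t}_\tau = 1/(\mu c)$. In particular, $\pi(\cdot|\tau)$ must be concentrated on a single shape whose placement is determined by $(\underline{t}_\tau,\bar{t}_\tau)$.

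To locate the support, I would use the explicit queue trajectory $q_{\tau,m}(t) = \mu c(t-\underline{t}_\tau)-\mu(t-\tau)^+$ on $[\underline{t}_\tau,\bar{t}_\tau]$. Since $m(t)=\mu c<\mu$ in the interior, assumption~(\ref{qpos}) forbids $q_{\tau,m}(t)=0$ there. This rules out $\underline{t}_\tau \geq \tau$ (where the queue would stay at zero), and computing when $q$ returns to zero after $\tau$ (at $t_0 = (\tau-c\underline{t}_\tau)/(1-c)$) forces $\bar{t}_\tau \leq t_0$, which combined with the length identity gives $\underline{t}_\tau \leq \tau - (1-c)/(\mu c)$. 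The reverse inequality comes from the global-minimum form of Definition~\ref{deff:obed}: if $\underline{t}_\tau < \tau - (1-c)/(\mu c)$, the constant support cost $c(\tau - \underline{t}_\tau)$ strictly exceeds $(1-c)/\mu$, whereas an agent arriving just after the queue empties at $\tau + 1/\mu$ incurs cost $(1-c)/\mu$; passing to the posterior expectation over $\tau$ preserves this strict gap and contradicts $t\in\arg\min_s \bar{c}_{t,\pi}(s)$. Hence $\underline{t}_\tau = \tau-(1-c)/(\mu c)$ and $\bar{t}_\tau = \tau + 1/\mu$, matching Theorem~\ref{fullinfo}.

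The main obstacle is this last step: Lemma~\ref{obedGPM} encodes only a local derivative condition and is satisfied by any constant $m$ on any interval of length $1/(\mu c)$, so pinning down the location forces a return to the global condition. Carrying the deviation argument through uniformly across the posterior over $\tau$ (rather than just for a single realization) is the delicate point, and is what lets assumption~(\ref{qpos}) here substitute for the $c\leq 0.5$ hypothesis used in Theorem~\ref{fulinfointerval}.
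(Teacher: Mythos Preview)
Your integrated Cauchy--Schwarz step is exactly the paper's argument in different clothing: the paper writes $m(t)=\mu c+\delta(t)$, expands $\int m^2 = \mu^2c^2T+2\mu c(1-\mu cT)+\int\delta^2$, and uses $\mu c(1-\mu cT)\geq 0$ together with $\int\delta^2\geq 0$ to force both to vanish. Your inequality $\int m^2\geq 1/L_\tau\geq \mu c$ with equality only for constant $m$ is the same computation.

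Where you diverge is in pinning the location of the support. The paper's proof simply asserts ``therefore $\underline t_\tau=\tau-(1-c)/(\mu c)$ and $\bar t_\tau=\tau+1/\mu$'' with no argument, so you are being more careful here. Your use of assumption~(\ref{qpos}) to obtain $\underline t_\tau\leq \tau-(1-c)/(\mu c)$ is correct and is presumably what the paper has in mind implicitly. For the reverse inequality your instinct is right that the local condition of Lemma~\ref{obedGPM} is blind to translation (any constant $m=\mu c$ on an interval of length $1/(\mu c)$ satisfies it), so one must return to the global Definition~\ref{deff:obed}. The gap you flag is real: the agent cannot deviate to the $\tau$-dependent point $\tau+1/\mu$, only to a fixed $s$. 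To close it you would need to exhibit, for a strictly left-shifted support, a single $s$ whose \emph{posterior-averaged} cost beats the constant $c\,\E[\tau-\underline t_\tau\mid t]$; this does work (for instance $s$ chosen near the middle of the range $\{\tau+1/\mu:\tau\in[\underline\tau(t),\bar\tau(t)]\}$ gives a strict improvement, since for every $\tau$ in the posterior the per-$\tau$ cost at such $s$ is at most the support cost and is strictly below it on a set of positive measure), but it needs to be written out rather than asserted. In short, your proposal matches the paper on the main step and is more honest than the paper about the location step, though that step is still not fully closed.
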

	Note that the result of Theorem \ref{fulinfores} holds regardless of the value of $c$, while in Theorem \ref{fulinfointerval}, we must have $c\leq 0.5$ for the result to hold.

	\section{Numerical Analysis}\label{numerical}
	In this section, based on the GPM formulation of our problem, we use Gloptipoly to solve the problem numerically. In this paper, we consider uniform discretization of time. 
	
	As showed in the previous section, if we restrict our attention to the signaling strategies that satisfy assumptions (\ref{interval}) and (\ref{qpos}), and if $\bar{t}_{\tau}-\underline{t}_{\tau}\leq \frac{1}{\mu c}$, then the solution is known to have support only on the full information equilibrium of Theorem \ref{fullinfo}. This result is  numerically confirmed as it is shown in Fig. \ref{fulinfonum} and \ref{fulinfonum8} for $c=0.5$ and $c=0.8$, respectively, and for $\mu=0.5$ and a bounded discrete interval  of $\tau \in \{3, 3.5, 4, 4.5, 5, 5.5, 6\}$ with uniform distribution.

	\begin{figure*}[ht]
		\centering
		\begin{subfigure}[b]{0.495\textwidth}
			\centering
		\begin{tikzpicture}
		
		\definecolor{mycolor1}{rgb}{0.15,0.15,0.15}
		\definecolor{mycolor2}{rgb}{0,0.447,0.741}
		\definecolor{mycolor3}{rgb}{0.85,0.325,0.098}
		\definecolor{mycolor4}{rgb}{0.929,0.694,0.125}
		\definecolor{mycolor5}{rgb}{0.494,0.184,0.556}
		\definecolor{mycolor6}{rgb}{0.466,0.674,0.188}
		\definecolor{mycolor7}{rgb}{0.301,0.745,0.933}
		\definecolor{mycolor8}{rgb}{0.635,0.078,0.184}
		
		\begin{axis}[
		scale only axis,
		every outer x axis line/.append style={mycolor1},
		every x tick label/.append style={font=\color{mycolor1}},
		every outer y axis line/.append style={mycolor1},
		every y tick label/.append style={font=\color{mycolor1}},
		width=2.42778in,
		height=1.85417in,
		y label style={at={(axis description cs:-0.07,0.9)},rotate=270,anchor=south},
		x label style={at={(axis description  cs:0.98,-0.01)},anchor=north},
		xlabel=$t$,
		ylabel=$m(t)$,
		ytick={0.05,0.15,0.25},
		yticklabels={0.05,0.15,0.25},
		xtick={2,4,6,8},
		xticklabels={2,4,6,8},
		xmin=0, xmax=9,
		ymin=0, ymax=0.32,
		axis on top]
		
		\addplot [
		color=mycolor2,
		solid,
		mark=asterisk,
		mark options={solid}
		]
		coordinates{
			(1.25,0)
			(1.5,0.249309)
			(1.75,0.250024)
			(2,0.250129)
			(2.25,0.250076)
			(2.5,0.250058)
			(2.75,0.250044)
			(3,0.25004)
			(3.25,0.250032)
			(3.5,0.250032)
			(3.75,0.250025)
			(4,0.250028)
			(4.25,0.250021)
			(4.5,0.250052)
			(4.75,0.250048)
			(5,0.250045)
			(5.25,0.250039)
			(5.5,0)
			
		};
		
		\addplot [
		color=mycolor3,
		solid
		]
		coordinates{
			(1.75,0)
			(2,0.249867)
			(2.25,0.24994)
			(2.5,0.249974)
			(2.75,0.249987)
			(3,0.249994)
			(3.25,0.250001)
			(3.5,0.250001)
			(3.75,0.250007)
			(4,0.250004)
			(4.25,0.250011)
			(4.5,0.250029)
			(4.75,0.25003)
			(5,0.250031)
			(5.25,0.250036)
			(5.5,0.250047)
			(5.75,0.250041)
			(6,0)
			
		};
		
		\addplot [
		color=mycolor4,
		solid
		]
		coordinates{
			(2.25,0)
			(2.5,0.24999)
			(2.75,0.249993)
			(3,0.249994)
			(3.25,0.249995)
			(3.5,0.249997)
			(3.75,0.249997)
			(4,0.249998)
			(4.25,0.249997)
			(4.5,0.249997)
			(4.75,0.249996)
			(5,0.249995)
			(5.25,0.24999)
			(5.5,0.249998)
			(5.75,0.249994)
			(6,0.250036)
			(6.25,0.250031)
			(6.5,0)
			
		};
		
		\addplot [
		color=mycolor5,
		solid
		]
		coordinates{
			(2.75,0)
			(3,0.249997)
			(3.25,0.249998)
			(3.5,0.249993)
			(3.75,0.249993)
			(4,0.249994)
			(4.25,0.249995)
			(4.5,0.249992)
			(4.75,0.249993)
			(5,0.249994)
			(5.25,0.249996)
			(5.5,0.249998)
			(5.75,0.250002)
			(6,0.25)
			(6.25,0.249999)
			(6.5,0.25003)
			(6.75,0.250025)
			(7,0)
			
		};
		
		\addplot [
		color=mycolor6,
		solid
		]
		coordinates{
			(3.25,0)
			(3.5,0.250004)
			(3.75,0.250004)
			(4,0.249994)
			(4.25,0.249994)
			(4.5,0.249988)
			(4.75,0.249989)
			(5,0.24999)
			(5.25,0.249992)
			(5.5,0.249995)
			(5.75,0.249998)
			(6,0.249998)
			(6.25,0.250001)
			(6.5,0.250001)
			(6.75,0.250001)
			(7,0.250026)
			(7.25,0.250022)
			(7.5,0)
			
		};
		
		\addplot [
		color=mycolor7,
		solid
		]
		coordinates{
			(3.75,0)
			(4,0.250009)
			(4.25,0.250009)
			(4.5,0.249987)
			(4.75,0.249987)
			(5,0.249988)
			(5.25,0.24999)
			(5.5,0.249994)
			(5.75,0.249996)
			(6,0.249996)
			(6.25,0.249998)
			(6.5,0.249998)
			(6.75,0.250001)
			(7,0.250001)
			(7.25,0.250002)
			(7.5,0.250023)
			(7.75,0.250021)
			(8,0)
			
		};
		
		\addplot [
		color=mycolor8,
		solid
		]
		coordinates{
			(4.25,0)
			(4.5,0.249983)
			(4.75,0.249984)
			(5,0.249985)
			(5.25,0.249986)
			(5.5,0.249996)
			(5.75,0.249998)
			(6,0.249997)
			(6.25,0.249999)
			(6.5,0.249998)
			(6.75,0.250001)
			(7,0.250001)
			(7.25,0.250004)
			(7.5,0.250005)
			(7.75,0.250008)
			(8,0.250028)
			(8.25,0.250028)
			(8.5,0)
			
		};
		
		\end{axis}
		
		\end{tikzpicture}
		\caption{$c=0.5$ and $\bar{t}_{\tau}-\underline{t}_{\tau}= \frac{1}{\mu c}$}
		\label{fulinfonum}
		\end{subfigure}
	\begin{subfigure}[b]{0.495\textwidth}
		\centering
		
		\begin{tikzpicture}
		
		\definecolor{mycolor1}{rgb}{0.15,0.15,0.15}
		\definecolor{mycolor2}{rgb}{0,0.447,0.741}
		\definecolor{mycolor3}{rgb}{0.85,0.325,0.098}
		\definecolor{mycolor4}{rgb}{0.929,0.694,0.125}
		\definecolor{mycolor5}{rgb}{0.494,0.184,0.556}
		\definecolor{mycolor6}{rgb}{0.466,0.674,0.188}
		\definecolor{mycolor7}{rgb}{0.301,0.745,0.933}
		\definecolor{mycolor8}{rgb}{0.635,0.078,0.184}
		
		\begin{axis}[
		scale only axis,
		every outer x axis line/.append style={mycolor1},
		every x tick label/.append style={font=\color{mycolor1}},
		every outer y axis line/.append style={mycolor1},
		every y tick label/.append style={font=\color{mycolor1}},
		width=2.42778in,
		height=1.85417in,
		y label style={at={(axis description cs:-0.07,0.9)},rotate=270,anchor=south},
		x label style={at={(axis description  cs:0.98,-0.01)},anchor=north},
		xlabel=$t$,
		ylabel=$m(t)$,
		ytick={0,0.2,0.4},
		yticklabels={0,0.2,0.4},
		xtick={2,4,6,8},
		xticklabels={2,4,6,8},
		xmin=1, xmax=9,
		ymin=0, ymax=0.5,
		axis on top]
		
		\addplot [
		color=mycolor2,
		solid,
		mark=asterisk,
		mark options={solid}
		]
		coordinates{
			(2.75,0)
			(3,0.399676)
			(3.25,0.400014)
			(3.5,0.400078)
			(3.75,0.400047)
			(4,0.400039)
			(4.25,0.400032)
			(4.5,0.400031)
			(4.75,0.400028)
			(5,0.400029)
			(5.25,0.400026)
			(5.5,0)
			
		};
		
		\addplot [
		color=mycolor3,
		solid
		]
		coordinates{
			(3.25,0)
			(3.5,0.399917)
			(3.75,0.399962)
			(4,0.400009)
			(4.25,0.40001)
			(4.5,0.400021)
			(4.75,0.400019)
			(5,0.400022)
			(5.25,0.40002)
			(5.5,0.400011)
			(5.75,0.400009)
			(6,0)
			
		};
		
		\addplot [
		color=mycolor4,
		solid
		]
		coordinates{
			(3.75,0)
			(4,0.399965)
			(4.25,0.399973)
			(4.5,0.399988)
			(4.75,0.399995)
			(5,0.399998)
			(5.25,0.400002)
			(5.5,0.400002)
			(5.75,0.400007)
			(6,0.400041)
			(6.25,0.400029)
			(6.5,0)
			
		};
		
		\addplot [
		color=mycolor5,
		solid
		]
		coordinates{
			(4.25,0)
			(4.5,0.399976)
			(4.75,0.399975)
			(5,0.399984)
			(5.25,0.39999)
			(5.5,0.399997)
			(5.75,0.400002)
			(6,0.40001)
			(6.25,0.40001)
			(6.5,0.400032)
			(6.75,0.400023)
			(7,0)
			
		};
		
		\addplot [
		color=mycolor6,
		solid
		]
		coordinates{
			(4.75,0)
			(5,0.399983)
			(5.25,0.399979)
			(5.5,0.399997)
			(5.75,0.399998)
			(6,0.399992)
			(6.25,0.399998)
			(6.5,0.400004)
			(6.75,0.400006)
			(7,0.400025)
			(7.25,0.400018)
			(7.5,0)
			
		};
		
		\addplot [
		color=mycolor7,
		solid
		]
		coordinates{
			(5.25,0)
			(5.5,0.40001)
			(5.75,0.400002)
			(6,0.399983)
			(6.25,0.39999)
			(6.5,0.399988)
			(6.75,0.399995)
			(7,0.399997)
			(7.25,0.400003)
			(7.5,0.400017)
			(7.75,0.400015)
			(8,0)
			
		};
		
		\addplot [
		color=mycolor8,
		solid
		]
		coordinates{
			(5.75,0)
			(6,0.399991)
			(6.25,0.399991)
			(6.5,0.399993)
			(6.75,0.399994)
			(7,0.399996)
			(7.25,0.399997)
			(7.5,0.400001)
			(7.75,0.400003)
			(8,0.400018)
			(8.25,0.400018)
			(8.5,0)
			
		};
		
		\end{axis}
		
		\end{tikzpicture}
		
		\caption {$c=0.8$ and $\bar{t}_{\tau}-\underline{t}_{\tau}= \frac{1}{\mu c}$.}
		\label{fulinfonum8}
	\end{subfigure}

		\begin{subfigure}[b]{0.49\textwidth}
		\centering
		%
		
		\begin{tikzpicture}
		
		\definecolor{mycolor1}{rgb}{0.15,0.15,0.15}
		\definecolor{mycolor2}{rgb}{0,0.447,0.741}
		\definecolor{mycolor3}{rgb}{0.85,0.325,0.098}
		\definecolor{mycolor4}{rgb}{0.929,0.694,0.125}
		\definecolor{mycolor5}{rgb}{0.494,0.184,0.556}
		\definecolor{mycolor6}{rgb}{0.466,0.674,0.188}
		\definecolor{mycolor7}{rgb}{0.301,0.745,0.933}
		\definecolor{mycolor8}{rgb}{0.635,0.078,0.184}
		
		\begin{axis}[
		scale only axis,
		every outer x axis line/.append style={mycolor1},
		every x tick label/.append style={font=\color{mycolor1}},
		every outer y axis line/.append style={mycolor1},
		every y tick label/.append style={font=\color{mycolor1}},
		width=2.42778in,
		height=1.85417in,
		y label style={at={(axis description cs:-0.07,0.9)},rotate=270,anchor=south},
		x label style={at={(axis description  cs:0.98,-0.01)},anchor=north},
		xlabel=$t$,
		ylabel=$m(t)$,
		ytick={0.05,0.15,0.25},
		yticklabels={0.05,0.15,0.25},
		xtick={2,4,6,8},
		xticklabels={2,4,6,8},
		xmin=0, xmax=9,
		ymin=0, ymax=0.35,
		axis on top]
		
		\addplot [
		color=mycolor2,
		solid,
			mark=asterisk,
		mark options={solid}
		]
		coordinates{
			(0.25,8.0936e-09)
			(0.5,1.23804e-08)
			(0.75,1.76539e-08)
			(1,3.85679e-08)
			(1.25,1.29277e-07)
			(1.5,2.79766e-05)
			(1.75,0.293776)
			(2,0.315362)
			(2.25,0.323896)
			(2.5,0.323595)
			(2.75,0.326517)
			(3,0.3159)
			(3.25,0.320562)
			(3.5,0.311609)
			(3.75,0.304057)
			(4,0.297888)
			(4.25,0.292852)
			(4.5,0.288706)
			(4.75,0.28525)
			(5,0)
			
		};
		
		\addplot [
		color=mycolor3,
		solid
		]
		coordinates{
			(0.5,0)
			(0.75,1.35783e-08)
			(1,2.00154e-08)
			(1.25,2.98783e-08)
			(1.5,5.26445e-08)
			(1.75,0.072485)
			(2,0.226945)
			(2.25,0.267752)
			(2.5,0.283321)
			(2.75,0.294628)
			(3,0.291133)
			(3.25,0.299063)
			(3.5,0.293812)
			(3.75,0.288979)
			(4,0.284855)
			(4.25,0.281401)
			(4.5,0.278508)
			(4.75,0.276069)
			(5,0.281816)
			(5.25,0.279232)
			(5.5,0)
			
		};
		
		\addplot [
		color=mycolor4,
		solid
		]
		coordinates{
			(1,0)
			(1.25,1.70762e-08)
			(1.5,2.14512e-08)
			(1.75,5.96703e-08)
			(2,0.109147)
			(2.25,0.192984)
			(2.5,0.229702)
			(2.75,0.25218)
			(3,0.258164)
			(3.25,0.270447)
			(3.5,0.270121)
			(3.75,0.268905)
			(4,0.267503)
			(4.25,0.266152)
			(4.5,0.264923)
			(4.75,0.263831)
			(5,0.270108)
			(5.25,0.268555)
			(5.5,0.274552)
			(5.75,0.272727)
			(6,0)
			
		};
		
		\addplot [
		color=mycolor5,
		solid
		]
		coordinates{
			(1.5,0)
			(1.75,2.23226e-08)
			(2,2.33574e-07)
			(2.25,0.108542)
			(2.5,0.169162)
			(2.75,0.204261)
			(3,0.220949)
			(3.25,0.238146)
			(3.5,0.243381)
			(3.75,0.246248)
			(4,0.247917)
			(4.25,0.248938)
			(4.5,0.249588)
			(4.75,0.250014)
			(5,0.256885)
			(5.25,0.25649)
			(5.5,0.262897)
			(5.75,0.261992)
			(6,0.267909)
			(6.25,0.26668)
			(6.5,0)
			
		};
		
		\addplot [
		color=mycolor6,
		solid
		]
		coordinates{
			(2,0)
			(2.25,0.0096689)
			(2.5,0.0982827)
			(2.75,0.148162)
			(3,0.177387)
			(3.25,0.200339)
			(3.5,0.212084)
			(3.75,0.219729)
			(4,0.224992)
			(4.25,0.22879)
			(4.5,0.231638)
			(4.75,0.233842)
			(5,0.241407)
			(5.25,0.242366)
			(5.5,0.249247)
			(5.75,0.249415)
			(6,0.255682)
			(6.25,0.255331)
			(6.5,0.261178)
			(6.75,0.260458)
			(7,0)
			
		};
		
		\addplot [
		color=mycolor7,
		solid
		]
		coordinates{
			(2.5,0)
			(2.75,0.0878424)
			(3,0.130555)
			(3.25,0.159698)
			(3.5,0.178442)
			(3.75,0.191225)
			(4,0.200352)
			(4.25,0.207134)
			(4.5,0.212345)
			(4.75,0.216458)
			(5,0.224771)
			(5.25,0.227185)
			(5.5,0.234574)
			(5.75,0.235894)
			(6,0.242533)
			(6.25,0.243121)
			(6.5,0.249244)
			(6.75,0.249317)
			(7,0.254807)
			(7.25,0.254501)
			(7.5,0)
			
		};
		
		\addplot [
		color=mycolor8,
		solid
		]
		coordinates{
			(3,0)
			(3.25,0.12339)
			(3.5,0.148396)
			(3.75,0.165772)
			(4,0.178351)
			(4.25,0.187799)
			(4.5,0.195118)
			(4.75,0.200937)
			(5,0.209917)
			(5.25,0.21363)
			(5.5,0.221472)
			(5.75,0.22382)
			(6,0.230792)
			(6.25,0.232216)
			(6.5,0.238583)
			(6.75,0.23936)
			(7,0.24503)
			(7.25,0.24536)
			(7.5,0.250028)
			(7.75,0.250028)
			(8,0)
			
		};
		
		\end{axis}
		
		\end{tikzpicture}
		\caption{ $c=0.5$ and $\bar{t}_{\tau}-\underline{t}_{\tau}>\frac{1}{\mu c}$.}
		\label{longernum}
	\end{subfigure}
		\begin{subfigure}[b]{0.49\textwidth}
		\centering
		\begin{tikzpicture}
		
		\definecolor{mycolor1}{rgb}{0.15,0.15,0.15}
		\definecolor{mycolor2}{rgb}{0,0.447,0.741}
		\definecolor{mycolor3}{rgb}{0.85,0.325,0.098}
		\definecolor{mycolor4}{rgb}{0.929,0.694,0.125}
		\definecolor{mycolor5}{rgb}{0.494,0.184,0.556}
		\definecolor{mycolor6}{rgb}{0.466,0.674,0.188}
		\definecolor{mycolor7}{rgb}{0.301,0.745,0.933}
		\definecolor{mycolor8}{rgb}{0.635,0.078,0.184}
		
		\begin{axis}[
		scale only axis,
		every outer x axis line/.append style={mycolor1},
		every x tick label/.append style={font=\color{mycolor1}},
		every outer y axis line/.append style={mycolor1},
		every y tick label/.append style={font=\color{mycolor1}},
		width=2.42778in,
		height=1.85417in,
		y label style={at={(axis description cs:-0.07,0.9)},rotate=270,anchor=south},
		x label style={at={(axis description  cs:0.98,-0.01)},anchor=north},
		xlabel=$t$,
		ylabel=$m(t)$,
		ytick={0.1,0.3,0.5},
		yticklabels={0.1,0.3,0.5},
		xtick={2,4,6,8},
		xticklabels={2,4,6,8},
		xmin=2, xmax=9,
		ymin=0, ymax=0.65,
		axis on top]
		
		\addplot [
		color=mycolor2,
		solid,
			mark=asterisk,
		mark options={solid}
		]
		coordinates{
			(0.25,0)
			(0.5,0)
			(0.75,0)
			(1,0)
			(1.25,0)
			(1.5,0)
			(1.75,0)
			(2,0)
			(2.25,-7.15168e-10)
			(2.5,2.75855e-10)
			(2.75,1.90336e-09)
			(3,1.06573e-08)
			(3.25,4.27066e-07)
			(3.5,0.399982)
			(3.75,0.474645)
			(4,0.50261)
			(4.25,0.514593)
			(4.5,0.52422)
			(4.75,0.531069)
			(5,0.521902)
			(5.25,0.530977)
			(5.5,0)
			(5.75,0)
			(6,0)
			(6.25,0)
			(6.5,0)
			(6.75,0)
			(7,0)
			(7.25,0)
			(7.5,0)
			(7.75,0)
			(8,0)
			(8.25,0)
			(8.5,0)
			(8.75,0)
			
		};
		
		\addplot [
		color=mycolor3,
		solid
		]
		coordinates{
			(0.25,0)
			(0.5,0)
			(0.75,0)
			(1,0)
			(1.25,0)
			(1.5,0)
			(1.75,0)
			(2,0)
			(2.25,0)
			(2.5,0)
			(2.75,3.26194e-10)
			(3,6.48173e-10)
			(3.25,4.38749e-09)
			(3.5,2.62392e-08)
			(3.75,0.275407)
			(4,0.376946)
			(4.25,0.423075)
			(4.5,0.451636)
			(4.75,0.470833)
			(5,0.472573)
			(5.25,0.48717)
			(5.5,0.52813)
			(5.75,0.51423)
			(6,0)
			(6.25,0)
			(6.5,0)
			(6.75,0)
			(7,0)
			(7.25,0)
			(7.5,0)
			(7.75,0)
			(8,0)
			(8.25,0)
			(8.5,0)
			(8.75,0)
			
		};
		
		\addplot [
		color=mycolor4,
		solid
		]
		coordinates{
			(0.25,0)
			(0.5,0)
			(0.75,0)
			(1,0)
			(1.25,0)
			(1.5,0)
			(1.75,0)
			(2,0)
			(2.25,0)
			(2.5,0)
			(2.75,0)
			(3,0)
			(3.25,5.16543e-10)
			(3.5,2.37525e-09)
			(3.75,0.00279286)
			(4,0.205016)
			(4.25,0.297876)
			(4.5,0.352344)
			(4.75,0.388433)
			(5,0.405092)
			(5.25,0.42724)
			(5.5,0.468948)
			(5.75,0.464329)
			(6,0.499982)
			(6.25,0.487946)
			(6.5,0)
			(6.75,0)
			(7,0)
			(7.25,0)
			(7.5,0)
			(7.75,0)
			(8,0)
			(8.25,0)
			(8.5,0)
			(8.75,0)
			
		};
		
		\addplot [
		color=mycolor5,
		solid
		]
		coordinates{
			(0.25,0)
			(0.5,0)
			(0.75,0)
			(1,0)
			(1.25,0)
			(1.5,0)
			(1.75,0)
			(2,0)
			(2.25,0)
			(2.5,0)
			(2.75,0)
			(3,0)
			(3.25,0)
			(3.5,0)
			(3.75,2.7882e-09)
			(4,0.0073885)
			(4.25,0.153973)
			(4.5,0.238223)
			(4.75,0.293731)
			(5,0.327538)
			(5.25,0.358364)
			(5.5,0.400929)
			(5.75,0.406975)
			(6,0.442991)
			(6.25,0.439294)
			(6.5,0.469367)
			(6.75,0.461225)
			(7,0)
			(7.25,0)
			(7.5,0)
			(7.75,0)
			(8,0)
			(8.25,0)
			(8.5,0)
			(8.75,0)
			
		};
		
		\addplot [
		color=mycolor6,
		solid
		]
		coordinates{
			(0.25,0)
			(0.5,0)
			(0.75,0)
			(1,0)
			(1.25,0)
			(1.5,0)
			(1.75,0)
			(2,0)
			(2.25,0)
			(2.5,0)
			(2.75,0)
			(3,0)
			(3.25,0)
			(3.5,0)
			(3.75,0)
			(4,0)
			(4.25,0.00105803)
			(4.5,0.116948)
			(4.75,0.193094)
			(5,0.245126)
			(5.25,0.285176)
			(5.5,0.328653)
			(5.75,0.346031)
			(6,0.382431)
			(6.25,0.387594)
			(6.5,0.41801)
			(6.75,0.416656)
			(7,0.441911)
			(7.25,0.437313)
			(7.5,0)
			(7.75,0)
			(8,0)
			(8.25,0)
			(8.5,0)
			(8.75,0)
			
		};
		
		\addplot [
		color=mycolor7,
		solid
		]
		coordinates{
			(0.25,0)
			(0.5,0)
			(0.75,0)
			(1,0)
			(1.25,0)
			(1.5,0)
			(1.75,0)
			(2,0)
			(2.25,0)
			(2.5,0)
			(2.75,0)
			(3,0)
			(3.25,0)
			(3.5,0)
			(3.75,0)
			(4,0)
			(4.25,0)
			(4.5,0)
			(4.75,0.0872637)
			(5,0.158463)
			(5.25,0.208213)
			(5.5,0.25265)
			(5.75,0.281946)
			(6,0.31875)
			(6.25,0.33323)
			(6.5,0.364004)
			(6.75,0.369788)
			(7,0.39547)
			(7.25,0.396361)
			(7.5,0.417831)
			(7.75,0.416029)
			(8,0)
			(8.25,0)
			(8.5,0)
			(8.75,0)
			
		};
		
		\addplot [
		color=mycolor8,
		solid
		]
		coordinates{
			(0.25,0)
			(0.5,0)
			(0.75,0)
			(1,0)
			(1.25,0)
			(1.5,0)
			(1.75,0)
			(2,0)
			(2.25,0)
			(2.5,0)
			(2.75,0)
			(3,0)
			(3.25,0)
			(3.5,0)
			(3.75,0)
			(4,0)
			(4.25,0)
			(4.5,0)
			(4.75,0)
			(5,0)
			(5.25,0.137113)
			(5.5,0.182439)
			(5.75,0.222745)
			(6,0.259924)
			(6.25,0.283011)
			(6.5,0.314117)
			(6.75,0.326494)
			(7,0.352569)
			(7.25,0.358531)
			(7.5,0.380435)
			(7.75,0.382588)
			(8,0.400018)
			(8.25,0.400018)
			(8.5,0)
			(8.75,0)
			
		};
		
		\end{axis}
		
		\end{tikzpicture}
		\caption{ $c=0.8$ and $\bar{t}_{\tau}-\underline{t}_{\tau}>\frac{1}{\mu c}$.}
		\label{longernum8}
	\end{subfigure}
	
		\caption{$m(t)$ for different values of $\tau \in \{3, 3.5, 4, 4.5, 5, 5.5, 6\}$. The stared plots corresponds to $\tau=3.5$.}
	\end{figure*}

	In order to investigate solutions other than the full information equilibrium, we allow the interval of $\bar{t}_{\tau}-\underline{t}_{\tau}$ to be longer than $\frac{1}{\mu c}$. We set $\bar{t}_{\tau}-\underline{t}_{\tau}=\frac{1}{\mu c}+0.75$. The optimal signaling strategy for each $\tau$ turns out to have support on a singleton arrival process and the different arrival processes corresponding to each $\tau$ are represented in Fig. \ref{longernum} and \ref{longernum8} for $c=0.5$ and $c=0.8$, respectively.

	%

	An intuitive explanation about why the solution looks like what we see in Fig. \ref{longernum}, is that the planner decides to put smaller values of $\tau$ in higher priority compared to larger values. We can see that the arrival processes associated with smaller values of $\tau$ result in smaller social cost. 
	However, they do not satisfy the obedience condition and are indeed far from it.  This has been compensated with the arrival processes associated with larger values of $\tau$ that result in higher social cost but help with the obedience condition. 

	\section{Conclusion}\label{conclusion}
	In this paper, we formulated and studied an information design problem for a non-atomic service scheduling game. We characterized the two extremes of full information and the no information equilibrium and investigated the conditions in which the planner should reveal the full information to the agents. We also formulated the information design problem as a GPM by imposing some assumptions on the model and then numerically solved some examples of the problem.

	\appendix
\optv{arxiv}{
	\subsection{Proof of Lemma \ref{avcost}}
		\begin{align*}
		\bar{c}&_{t,\pi}(s)=\mathbb{E}\{c_{\tau,m}(s)|t,\pi\}\\&= \frac{\mathbb{E}\{q(s)|t,\pi\}}{\mu}+\mathbb{E}\{(s-\tau)^+-c(s-\tau)\}
		\\&= \frac{1}{\mu \bar{m}(t)}\int_{m,\tau>\tilde{\tau}_m(s)}f_{\tau}(\tau)\pi(m|\tau)m(t)\nonumber \\&(\int_{l=-\infty}^s m(l)\de l-\mu(s-\tau)^+ +\mu c(\tau-s)+\mu (s-\tau)^+) \nonumber \\& \hspace{6cm} \de \tau    \de m \nonumber \\ 
		&+ \frac{1}{ \bar{m}(t)}\int_{m,\tau<\tilde{\tau}_m(s)}f_{\tau}(\tau)\pi(m|\tau)m(t) \nonumber \\& \hspace{3cm}(c(\tau-s)+ (s-\tau)^+) \ \de \tau    \de m 
		\\=&\frac{1}{\mu \bar{m}(t)}\int_{m,\tau>\tilde{\tau}_m(s)}f_{\tau}(\tau)\pi(m|\tau)m(t)\nonumber \\& \hspace{2.2cm}(\int_{l=-\infty}^s m(l)\de l-\mu c s +\mu c\tau) \ \de \tau    \de m  \nonumber 
		\\&+ \frac{1}{ \bar{m}(t)}\int_{m,\tau<\tilde{\tau}_m(s)}f_{\tau}(\tau)\pi(m|\tau)m(t) \nonumber \\ & \hspace{3.2cm}(c(\tau-s)+(s-\tau)) \ \de \tau    \de m 
		\\ =&\frac{1}{\mu \bar{m}(t)}\int_{m,\tau>\tilde{\tau}_m(s)}f_{\tau}(\tau)\pi(m|\tau)m(t)\nonumber \\ & \hspace{3.5cm} (\int_{l=-\infty}^s m(l)\de l-\mu c s) \de \tau    \de m  \nonumber 
		\\& + \frac{1}{ \bar{m}(t)}\int_{m,\tau<\tilde{\tau}_m(s)}f_{\tau}(\tau)\pi(m|\tau)m(t) \nonumber \\ & \hspace{2.6cm} ((1-c)s-\tau) \ \de  \tau    \de m +c \mathbb{E}(\tau|t)
		\end{align*}
		where $\bar{m}(t)=\int_{\tau,m}f_{\tau}(\tau)\pi(m|\tau)m(t)\de\tau \de m$.

}
	
\subsection{Proof of Lemma \ref{obedcon}}
		In order for an agent to obey her suggestion, we must have $\bar{c}_{t,\pi}(t)$ to be the global minimizer of $\bar{c}_{t,\pi}(s)$. In the next lemma, we will show that $\bar{c}_{t,\pi}(s)$ is convex and therefore, any local minimizer is a global minimizer.
		\begin{lemma}
			$\bar{c}_{t,\pi}(s)$ is convex with respect to $s$.
			\label{convex}
		\end{lemma}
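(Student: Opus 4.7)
The plan is to establish convexity by computing the second derivative of $\bar{c}_{t,\pi}(s)$ with respect to $s$ and arguing it is non-negative. I would begin from the explicit expression in Lemma \ref{avcost} and exploit the fact that the integration regions depend on $s$ only through the function $\tilde\tau_m(s)$.

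Differentiating $\bar{c}_{t,\pi}(s)$ once, Leibniz's rule produces boundary contributions at $\tau = \tilde\tau_m(s)$. The key algebraic observation is that at this boundary $q_{\tau,m}(s) = 0$ forces $\int_{-\infty}^s m(l)\de l = \mu(s - \tilde\tau_m(s))$, so the integrands of the two integrals in Lemma \ref{avcost} both evaluate to $(1-c)s - \tilde\tau_m(s)$ and the boundary terms cancel. This yields the clean expression
\begin{align*}
\frac{\de\bar{c}_{t,\pi}}{\de s}(s) = \frac{1}{\mu\bar{m}(t)}\int_m\int_{\tau > \tilde\tau_m(s)} f_\tau\pi m(t)(m(s)-\mu c)\de\tau \de m + \frac{1-c}{\bar{m}(t)}\int_m\int_{\tau < \tilde\tau_m(s)} f_\tau\pi m(t)\de\tau \de m.
\end{align*}
Differentiating a second time and using $\tilde\tau_m'(s) = 1 - m(s)/\mu$, together with the observation that the difference of the two integrands at the boundary is now $(1-c) - (m(s)-\mu c)/\mu = 1 - m(s)/\mu$, the combined boundary contribution simplifies to
\begin{align*}
\frac{1}{\bar{m}(t)}\int_m f_\tau(\tilde\tau_m(s))\pi(m|\tilde\tau_m(s))m(t)\left(1 - \frac{m(s)}{\mu}\right)^{2}\de m,
\end{align*}
which is non-negative under assumption (\ref{lessmu}).

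The remaining contribution to $\frac{\de^2\bar{c}_{t,\pi}}{\de s^2}(s)$ comes from the direct $s$-dependence of $m(s)$ inside the first integrand, giving
\begin{align*}
\frac{1}{\mu\bar{m}(t)}\int_m m(t)\,m'(s)\left(\int_{\tilde\tau_m(s)}^\infty f_\tau(\tau)\pi(m|\tau)\de\tau\right)\de m.
\end{align*}
I expect the hardest part of the proof to be controlling this last term, because $m$ is only assumed piecewise continuous by assumption (\ref{piece}) and $m'(s)$ can contain negative distributional $\delta$-masses, most notably at the right endpoint of the support of an arrival process where $m$ drops to zero. The plan is to interpret $m'(s)$ in the distributional sense and to show that every negative $\delta$ at a downward jump of $m$ is dominated by the simultaneous upward jump of $(1-m(s)/\mu)^2$ appearing in the boundary contribution above, so that the total distributional second derivative of $\bar{c}_{t,\pi}$ is a non-negative measure and hence $\bar{c}_{t,\pi}(s)$ is convex in $s$.
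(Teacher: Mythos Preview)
Your first derivative agrees with the paper's equation \eqref{cprime}, but from there the paper does \emph{not} differentiate again. It argues directly that \eqref{cprime} is nondecreasing in $s$: under assumption (\ref{lessmu}) the integrand $f_\tau\,\pi\,m(t)$ of the first integral pointwise dominates the integrand $f_\tau\,\pi\,m(t)\,m(s)/\mu$ of the second, and since $\tilde\tau_m(s)$ is nondecreasing (Lemma~\ref{taucont}), raising $s$ transfers $\tau$--mass from the second integral to the first and can only increase the sum. That comparison is the paper's entire convexity argument; $m'(s)$ is never introduced.

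Your second-derivative route has a real gap in the last step. At a downward jump of $m$ at some $s_0$, the \emph{first} derivative $g(s)=\de\bar c_{t,\pi}/\de s$ itself drops discretely, because the factor $m(s)$ inside the second integrand falls while $\tilde\tau_m$ stays continuous; hence $g(s_0^+)<g(s_0^-)$ outright. The ``upward jump of $(1-m(s)/\mu)^{2}$'' you invoke is a jump of a bounded \emph{coefficient} appearing in $g'$, not a positive Dirac mass, so it is the wrong order of singularity to offset a negative $\delta$ in $g'$. Moreover your plan treats only jump discontinuities and says nothing about intervals on which $m$ is smooth and strictly decreasing (e.g.\ the no-information profile of Theorem~\ref{noinfo}), where the $m'(s)$ contribution is an ordinary negative function with no proposed control. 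The difficulty you isolate is in fact genuine: the paper's monotonicity argument tracks only the moving limit $\tilde\tau_m(s)$ and is silent about the variation of $m(s)$ inside the second integrand, so neither route as written fully closes this point.
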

		\begin{proof}
			In order to prove convexity of $\bar{c}_{t,\pi}(s)$, we prove that its  derivative is increasing. But we first go over some preliminary results. 
			
			For $t\geq 0$, if $\int_{-\infty}^t m(s)\de s\leq \mu t$, we have
			\begin{align*}
			\int_{-\infty}^t m(s)\de s=\mu(t-\tilde{\tau}_m(t))
			\end{align*}
			Since we have $\tilde{\tau}_m(t)=0$ for  $t\leq0$, the following holds.
			\begin{align*}
			\tilde{\tau}_m(t)=(t-\int_{-\infty}^{t}\frac{m(s)}{\mu}\de s)^+
			\end{align*}
			\begin{lemma}
				\label{taucont}
				If $m(t)\leq \mu$ for all $t$, then $\tilde{\tau}_m(t)$ is continuous and increasing with respect to $t$. 
				%
				Furthermore, $\tilde{\tau}_m(t)$ is differentiable for all $t$ except possibly  for  $t=\tilde{t}$, where $\tilde{t}$ will be characterized in the proof.
			\end{lemma}
			\begin{proof}
				Since $m(t)\leq \mu$, there exists a time $\tilde{t}$, for which we have  $t-\int_{-\infty}^{t}\frac{m(s)}{\mu}\de s\geq 0$ for $t\geq \tilde{t}$ and $t-\int_{-\infty}^{t}\frac{m(s)}{\mu}\de s<0$ for $t< \tilde{t}$. It is clear that $\tilde{\tau}_m(t)$ is continuous, differentiable and increasing for $t< \tilde{t}$ and for $t>\tilde{t}$. Also, since the assumption of $m(t)\leq \mu$
				eliminates the possibility of $m(t)$ including a delta function, $\tilde{\tau}_m(t)$ is continuous for all $t$.
				
			\end{proof}
			Since we have assumed $m(t)\leq \mu$, and according to Lemma \ref{taucont}, we know $\tilde{\tau}_m(t)$ is continuous,  increasing, and differentiable for $t\neq \tilde{t}$,  we can write the following for $\frac{\de}{\de s}\bar{c}_{t,\pi}(s)$ for $s\neq \tilde{t}$.
			\begin{align}
			&\frac{\de}{\de s}\bar{c}_{t,\pi}(s)
			=\frac{\de}{\de s}\frac{1}{ \bar{m}(t)}\int_{\tau,m}c_{m,\tau}(s)f_{\tau}(\tau)\pi(m|\tau)m(t) \de \tau \de m\nonumber 
			\\&= \frac{\de}{\de s}\frac{1}{ \bar{m}(t)}(\int_m\int_0^{\tilde{\tau}_m(s)}\hspace{-0.35cm}(1-c)(s-\tau)f_{\tau}(\tau)\pi(m|\tau)m(t) \de \tau \de m \nonumber 
			\\&+\int_m \int_{\tilde{\tau}_m(s)}^{\infty}  (\int_{l=-\infty}^s\frac{m(l)}{\mu}\de l-c(s-\tau)) \nonumber 
			\\& \hspace{4.5cm}f_{\tau}(\tau)\pi(m|\tau)m(t) \de \tau \de m)\nonumber
			\\	&=\frac{1}{ \bar{m}(t)}(\int_m \int_0^{\tilde{\tau}_m(s)}(1-c)f_{\tau}(\tau)\pi(m|\tau)m(t) \de \tau \de m \nonumber
			+\\& \int_m\tilde{\tau}_m'(s)(1-c)(s-\tilde{\tau}_m(s))f_{\tau}(\tilde{\tau}_m(s))\pi(m|\tilde{\tau}_m(s))m(t) \de m \nonumber 
			\\	&-\int_m\tilde{\tau}_m'(s)(\int_{l=-\infty}^s\frac{m(l)}{\mu}\de l-c(s-\tilde{\tau}_m(s))) \nonumber
			\\& \hspace{4cm}f_{\tau}(\tilde{\tau}_m(s))\pi(m|\tilde{\tau}_m(s))m(t) \de m \nonumber \\&+\int_m\int_{\tilde{\tau}_m(s)}^{\infty}  (\frac{m(s)}{\mu}-c)f_{\tau}(\tau)\pi(m|\tau)m(t) \de \tau \de m)\nonumber 
			\\
			&= \frac{1}{\bar{m}(t)}( \int_m\int_{\tau=0}^{\tilde{\tau}_m(s)}f_{\tau}(\tau)\pi(m|\tau)m(t)\de\tau \de m \nonumber 
			\\& \quad  +\int_m \int_{\tilde{\tau}_m(s)}^{\infty} f_{\tau}(\tau)\pi(m|\tau)m(t)\frac{m(s)}{\mu}\de\tau \de m)-c \label{cprime}
			\end{align}
			One can see that the left and right derivative of $\bar{c}_{t,\pi}(s)$ at $s=\tilde{t}$ are equal to equation \eqref{cprime} and therefore, equation \eqref{cprime} holds for all $s$.
			According to equation \eqref{cprime}, 
			since $m(t)\leq \mu$ for all $t$, the term in the first integral of $\frac{\de}{\de s}\bar{c}_{t,\pi}(s)$, i.e., $f_{\tau}(\tau)\pi(m|\tau)m(t)$, is greater than the term in the second integral, i.e., $f_{\tau}(\tau)\pi(m|\tau)m(t)\frac{m(s)}{\mu}$. Therefore, as we increase $s$ and therefore we increase $\tilde{\tau}_m(s)$, we are increasing the range of the first integral and decreasing the range of the second, thus, increasing $\frac{\de}{\de s}\bar{c}_{t,\pi}(s)$. Hence, $\bar{c}_{t,\pi}(s)$ is convex with respect to $s$.
		\end{proof}
		
		According to Lemma \ref{convex}, it is necessary and sufficient for $t$ to be a local minimizer of $\bar{c}_{t,\pi}(s)$ to be its global minimizer. Therefore, we should have $\frac{\de}{\de s}\bar{c}_{t,\pi}(s)|_t=0$ and we have the result by setting \eqref{cprime} at $t$ to 0.
\optv{arxiv}{
	\subsection{Proof of Theorem \ref{fullinfo}}
		Let us assume that we have a positive queue over the interval $[t_1,t_2]$, i.e., $q_{\tau,m}(t)>0$ for $t \in (t_1,t_2)$ and $q_{\tau,m}(t_1)=0$ and $q_{\tau,m}(t_2)=0$. Note that we do not have any assumptions on the queue length at other times. 
		In order not to have any profitable deviations for agents arriving in $t\in [t_1,t_2]$,  we should have $c'_{\tau,m}(t)=0$ for $t \in (t_1,t_2)$ to avoid profitable deviations by changing the position inside the  queue. It implies the following.
		\begin{align*}
		c'_{\tau,m}(t)= \frac{m(t)}{\mu}-c=0 \Rightarrow m(t)=\mu c, \ t \in (t_1,t_2)
		\end{align*}
		Since the queue size is $0$ at $t_2$, an agent arriving at $t_2$ does not have any incentives for arriving later. Furthermore, in order for an agent arriving at $t_1$ not to have profit by arriving earlier, it is sufficient to have $t_1\leq \tau$.  This condition implies that we can not have multiple queues in the full information equilibrium. 
		
	  We can calculate the queue length at $t$ as follows.
		\begin{align*}
		q_{\tau,m}(t)=\int_{t_1}^{t}m(t)\de t-\mu(t-\tau)^+=c \mu(t-t_1)-\mu(t-\tau)^+
		\end{align*}
		Setting the queue at $t_2$ to 0 will give us the equation below that  relates $t_1$ and $t_2$ to $\tau$.
		\begin{align*}
		\tau=ct_1+(1-c)t_2
		\end{align*}
		Since all agents must arrive between $[t_1,t_2]$, we have
		\begin{align*}
		c\mu(t_2-t_1)=1, 
		\end{align*}
		and therefore, we have
		\begin{align*}
		t_1=\tau-\frac{1-c}{c\mu},\quad
		t_2=\tau+\frac{1}{\mu}
		\end{align*}

	\subsection{Proof of Lemma \ref{mgrmu}}
		Assume we have a delta function of size $a$ at some time $t$ in the arrival process. We will show that the agent arriving at time $t$ has a profit by arriving slightly before $t$ at $s=t-\de t$. Note that we have $q_{\tau,m}(t-\de t)=q_{\tau,m}(t)-a$ for every $\tau$. The average cost of arriving at time $t$ is
		\begin{align*}
		\bar{c}(t)=\int_{\tau=0}^{\infty}(\frac{q_{\tau,m}(t)}{\mu}-c(t-\tau)+(t-\tau)^+)f_{\tau}(\tau)\de  \tau
		\end{align*}
		On the other hand, the average cost of arriving at time $s=t-\de t$ is 
		\begin{align*}
		\bar{c}(t-\de t)=&\int_{\tau=0}^{\infty}(\frac{q_{\tau,m}(t-\de t)}{\mu}-c(t-\de t-\tau) \nonumber \\&+(t-\de t-\tau)^+)f_{\tau}(\tau)\de \tau
		\end{align*}
		Subtracting the two will result in the following.
		\begin{align*}
		\bar{c}(t)-\bar{c}(t-\de t)=\frac{a}{\mu}-c \de t&+\de t \mathbf{1}(t \geq \tau)>0, \nonumber \\&  \text{for $\de t$ small enough.}
		\end{align*}
		Therefore, we can not have a delta function in the arrival process. 
		
		Next, assume $m(t)>\mu$ for some time $t$, which due to piecewise continuity of $m$, implies that $m(s)>\mu$ in a neighborhood of $t$. If $m$ is not continuous at $t$, we consider a point in the neighborhood of $t$ where $m$ is continuous.  Therefore,  
		we have $q(t)>0$ in a neighborhood of $t$. 
		Let us denote $t_0$ to be the latest time before $t$ that $q(t_0)=0$. We have   $q(t)=\int_{l=t_0}^t m(l)\de l- \mu (t-\max(t_0,\tau))\mathbf{1}(t \geq \tau)$. Therefore, we can write the derivative of the average value of the cost as follows. 
		\begin{align*}
		\bar{c}^{\prime}(t)
		&= \frac{\de }{\de t}(\int_{\tau=0}^{\infty}(\int_{l=t_0}^t \frac{m(l)}{\mu}\de l-(t-\max(t_0,\tau))\mathbf{1}(t \geq \tau)\\ &\hspace{1.5cm}+(t-\tau)^+-c(t-\tau))f_{\tau}(\tau)\de \tau=\frac{m(t)}{\mu}-c
		\end{align*}
		Setting $\bar{c}^{\prime}(t)=0$ results in $m(t)=c \mu<\mu$. Therefore, we can not have $m(t)>\mu$.

	\subsection{Proof of Theorem \ref{noinfo}}
		We define $\bar{c}(t)$ to be the average value of the cost $c_{\tau,m}(t)$ with respect to $\tau$ using $f_{\tau}(\cdot)$.  
		In order to have an equilibrium, each agent arriving at time $t$ should to be acting rationally by doing so. Therefore, we should have $\bar{c}^{\prime}(t)=0$ for every $t$ that $m(t)> 0$ in a neighborhood of $t$. If $m(t)> 0$ in a right neighborhood of $t$, the right derivative of the expected cost should be zero and the left derivative should be non-positive. Similar rule applies for the left neighborhoods. 
		
		In Lemma \ref{mgrmu}, we proved that in order to satisfy incentive constraints in the no information case, we can never have $m(t)>\mu$, and $m(t)$ can never include a delta function. Therefore, we have  $m(t)\leq \mu$ for all $t$. 
		Also, according to Lemma \ref{taucont}, we know $\tilde{\tau}_m(t)$ is continuous and differentiable.  Therefore, the derivative of the average value of the cost, $\bar{c}^{\prime}(t)$ is given as follows. 
		\begin{align}
		\bar{c}&^{\prime}(t)
		=\frac{\de }{\de t}\int_{\tau}c_{\tau,m}(t)f_{\tau}(\tau)d \tau \nonumber 
		\\=& \frac{\de }{\de t}(\int_0^{\tilde{\tau}_m(t)}(1-c)(t-\tau)f_{\tau}(\tau)\de \tau\nonumber \\&+\int_{\tilde{\tau}_m(t)}^{\infty}  (\int_{l=-\infty}^t\frac{m(l)}{\mu}\de l-c(t-\tau))f_{\tau}(\tau)\de \tau)\nonumber
		\\
		=&\tilde{\tau}^{\prime}_m(t)(1-c)(t-\tilde{\tau}_m(t))f_{\tau}(\tilde{\tau}_m(t))\nonumber \\	&-\tilde{\tau}^{\prime}_m(t)(\int_{l=-\infty}^t\frac{m(l)}{\mu}\de l-c(t-\tilde{\tau}_m(t)))f_{\tau}(\tilde{\tau}_m(t)) \nonumber \\&+\int_0^{\tilde{\tau}_m(t)}(1-c)f_{\tau}(\tau)\de  \tau \nonumber 
		+\int_{\tilde{\tau}_m(t)}^{\infty}  (\frac{m(t)}{\mu}-c)f_{\tau}(\tau)\de \tau \nonumber 
		\\
		=&\int_0^{\tilde{\tau}_m(t)}(1-c)f_{\tau}(\tau)\de  \tau+\int_{\tilde{\tau}_m(t)}^{\infty}  (\frac{m(t)}{\mu}-c)f_{\tau}(\tau)\de \tau \nonumber
		\\
		=&1-e^{-\lambda \tilde{\tau}_m(t)}+\frac{m(t)}{\mu}e^{-\lambda \tilde{\tau}_m(t)}-c \nonumber
		\\
		&=1-c-e^{-\lambda (t-\int_{l=-\infty}^{t}\frac{m(l)}{\mu}\de  l)^+}(1-\frac{m(t)}{\mu}) \nonumber
		\end{align}
		
		
		Setting $\bar{c}^{\prime}(t)=0$ will result in the following.
		\begin{align}
		e^{-\lambda (t-\int_{l=-\infty}^{t}\frac{m(l)}{\mu}\de l)^+}(1-\frac{m(t)}{\mu})=1-c
		\label{m-noinfo}
		\end{align}
		Equation \eqref{m-noinfo} holds for all $t$ such that we have $m(t)>0$. Note that if $m(s)=0$ for $s<t$ and $m(s)>0$ for $s\geq t$, the left derivative of $\bar{c}(t)$ is non-positive given equation \eqref{m-noinfo} holds for $t$. This implies that, as we increase $t$, we can have discontinuity in $m(t)$ from 0 to a non zero value. This is not the case for right neighborhoods with zero arrivals, i.e., $m(s)=0$ for an interval of $s>t$ and $m(s)>0$ for an interval of $s\leq t$ . In this case,  the right derivative will be  non-positive if \eqref{m-noinfo} holds for $t$. However, we need the right derivative to be positive for the agents to not have profitable deviations. Hence, whenever we have $m(s)=0$ for an interval of $s>t$, we should have $m(t)=0$, i.e., $m(t)$ must be continuous when transitioning to zero from non zero values. Also, note that the assumption of $m(t)\leq \mu$ clearly holds for any $m(t)$ satisfying equation \eqref{m-noinfo}. 
		Therefore, we have the following.
		
		If we take the derivative of equation \eqref{m-noinfo} w.r.t. $t$ for $t\geq \tilde{t}$ ($\tilde{t}$ is defined in Lemma \ref{taucont}), we have
		\begin{align}
		&e^{-\lambda (t-\int_{l=-\infty}^{t}\frac{m(l)}{\mu}\de l)}(\lambda(1-\frac{m(t)}{\mu})^2\nonumber +\frac{m^{\prime}(t)}{\mu})=0\\
		&\Rightarrow\  -\lambda(1-\frac{m(t)}{\mu})^2-\frac{m^{\prime}(t)}{\mu}=0 \ \Rightarrow \ \frac{\de m}{(\mu-m)^2}=-\frac{\lambda}{\mu}\de t \nonumber\\
		&\Rightarrow\frac{1}{\mu-m}=\frac{-\lambda t +\beta}{\mu}   \Rightarrow m(t)=\mu- \frac{\mu}{\beta-\lambda t} \label{m-difeq}
		\end{align}
		In order to derive constant $\beta$, we assume that $m(t)$ is $0$ outside of an interval of $[t_1,t_2]$. If $\tilde{t}>0$ then we must have $t_1<0$. For now, we assume $\tilde{t_1}=0$ and therefore, $t_1\geq 0$. We must have $m(t_2)=0$ as mentioned in the discussions above. Also, since $\int_0^{t_2}m(t)\de t=1$, we have $\tilde{\tau}_m(t_2)=t_2-\frac{1}{\mu}$. Therefore, according to equation   \eqref{m-noinfo}, we have the following for $t_2$.
		\begin{align*}
		&e^{-\lambda (t_2-\frac{1}{\mu})}=1-c\\
		& \Rightarrow \ t_2=\frac{-\ln(1-c)}{\lambda}+\frac{1}{\mu}
		\end{align*}
		and we know $m(t_2)=0$ which will give us $\beta$ as follows.
		\begin{align*}
		&\mu- \frac{\mu}{\beta-\lambda t_2}=0 \Rightarrow\  \beta=\lambda t_2+1\\
		&\beta=-\ln(1-c)+\frac{\lambda}{\mu}+1.
		\end{align*}
		On the other hand, we must have $\int_{t_1}^{t_2}m(t)=1$, which results in the following equation to derive $t_1$.
		\begin{align}
		&\int_{t_1}^{t_2}m(t)\de t =\int_{t_1}^{t_2}(\mu- \frac{\mu}{\beta-\lambda t})\de t=1 \nonumber  \\
		&\Rightarrow \mu(t_2-t_1)-\frac{\mu}{\lambda}\ln(\lambda(t_2- t_1)+1 )=1 \nonumber\\
		&\Rightarrow \ln(1-c)+\lambda t_1+\ln(\frac{\lambda}{\mu}-\ln(1-c)-\lambda t_1+1 )=0 \label{t1pos}
		\end{align}
		If $t_1$ derived from the above equation is non-negative, then the equilibrium is characterized. Next, we consider the possibility of  $t_1\leq0$, which results in $\tilde{t}>0$. For $t\leq\tilde{t}$, $\tilde{\tau}_m(t)=0$ and according to \eqref{m-noinfo} we have $1-\frac{m(t)}{\mu}=1-c$ and 
		therefore, 
		we must have $m(t)=\mu c$ for  $t_1\leq t \leq \tilde{t}$. The queue size must be 0 at $\tilde{t}$ if $\tau=0$, because for $t>\tilde{t}$, we have $\tilde{\tau}_m(t)>0$. This results in the following.
		\begin{align*}
		& \mu c (\tilde{t}-t_1)=\mu\tilde{t} \Rightarrow \tilde{t}=-\frac{ c}{1- c} t_1
		\end{align*}
		On the other hand, since $\tilde{\tau}_m(t)>0$ for $t>\tilde{t}$ and $\tilde{\tau}_m(\tilde{t})=0$, $m(t)$ follows equation \eqref{m-difeq} for $t\geq \tilde{t}$ and we have $m(\tilde{t})=\mu-\frac{\mu}{\beta-\lambda \tilde{t}}$. Therefore, we have the following. 
		\begin{align}
		& m(\tilde{t})=\mu-\frac{\mu}{\beta-\lambda \tilde{t}}=\mu c \nonumber  \\
		& \Rightarrow 1-\frac{1}{\lambda(t_2+\frac{c}{1-c} t_1)+1}= c \nonumber  \\
		& \Rightarrow  \lambda((1-c)t_2+c t_1)=c\nonumber \\
		& \Rightarrow  -(1-c)\ln(1-c)+\frac{(1-c)\lambda}{\mu}+\lambda c t_1=c \nonumber 
		\\
		& \Rightarrow  t_1=\frac{1-c}{\lambda c}\ln(1-c)-\frac{1-c}{\mu c}+\frac{1}{\lambda} \label{t1neg}
		\end{align}
		If the value of $t_1$ above is negative, the no information equilibrium is characterized. Notice that we might have two types of no information equilibrium, one with negative $t_1$ and one with a positive one if the value of $t_1$ satisfying equations \eqref{t1pos} and \eqref{t1neg} is positive and negative, respectively.
	}

	\subsection{Proof of Theorem \ref{fulinfointerval}}
		Consider any $m$  in the support of $\pi(\cdot|\tau)$.  We show $m(t)$ as $m(t)=\mu c+\delta(t)$, where $\delta(t)$ is defined over $[\underline{t}_{\tau}, \bar{t}_{\tau}]$. Since we have $\int_{\underline{t}_{\tau}}^{\bar{t}_{\tau}}m(t)\de t=1$ and $ \bar{t}_{\tau}-\underline{t}_{\tau}\leq \frac{1}{\mu c}$, we must have $\int_{\underline{t}_{\tau}}^{\bar{t}_{\tau}}\delta(t)\de t\geq 0$. Using Lemma \ref{obedcon} we have the following.
		\begin{align}
		&(1-c)\int_{\tau,m}f_{\tau}(\tau)\pi(m|\tau)m(t)\mathbf{1}(\tau\leq \tilde{\tau}_m(t)) \de m+ \nonumber \\& \frac{1}{\mu}\int_{\tau,m} f_{\tau}(\tau)\pi(m|\tau)m(t)(m(t)-\mu c)\mathbf{1}(\tau>\tilde{\tau}_m(t)) \de m  =0 \nonumber
		\end{align}
		Since $\tilde{\tau}_m(t)$ is increasing in $t$, we can define its inverse by $\tilde{t}_m(\tau)$, i.e., we have $q_{\tau,m}(t)> 0$ for $\underline{t}_{\tau}\leq t< \tilde{t}_m(\tau)$ and $q_{\tau,m}(t)= 0$ for $t\geq  \tilde{t}_m(\tau)$. We have
		\begin{align*}
		&\frac{1}{\mu}\int_{\tau,m}f_{\tau}(\tau)\pi(m|\tau)\int_t m(t)(\mu(1-c)\mathbf{1}(t\geq \tilde{t}_m(\tau)) \nonumber \\&\quad +(m(t)-\mu c)\mathbf{1}(t< \tilde{t}_m(\tau)))\de \tau \de m \de t=0\\
		&\frac{1}{\mu}\int_{\tau,m}f_{\tau}(\tau)\pi(m|\tau)\int_t (\delta(t)+\mu c)(\mu(1-2c+c)\nonumber \\& \ \quad \quad \quad \mathbf{1}(t\geq \tilde{t}_m(\tau)) +\delta(t)\mathbf{1}(t< \tilde{t}_m(\tau)))\de \tau \de m \de t=0\\
		&\frac{1}{\mu}\int_{\tau,m}f_{\tau}(\tau)\pi(m|\tau)\int_{\underline{t}_{\tau}}^{\bar{t}_{\tau}} (\mu c\delta(t)+\mu^2c^2\mathbf{1}(t\geq \tilde{t}_m(\tau))\nonumber +\\&\ \mu(1-2c)m(t)\mathbf{1}(t\geq \tilde{t}_m(\tau))+\delta(t)^2\mathbf{1}(t< \tilde{t}_m(\tau)))\nonumber\\& 
		\hspace{6.7cm}	\de \tau \de m \de t
		=0
		\end{align*}
		We notice that all of the elements of the above integral are greater than or equal to zero. Therefore, they must all be zero for the sum to be zero. Hence, we have
		\begin{align*}
		&\int_{\underline{t}_{\tau}}^{\bar{t}_{\tau}} \mu^2c^2\mathbf{1}(t\geq \tilde{t}_m(\tau)) \de t=0\\
		& \int_{\underline{t}_{\tau}}^{\bar{t}_{\tau}} \mu(1-2c)m(t)\mathbf{1}(t\geq \tilde{t}_m(\tau)) \de t=0\\
		&\int_{\underline{t}_{\tau}}^{\bar{t}_{\tau}} \delta(t)^2\mathbf{1}(t< \tilde{t}_m(\tau))) \de t=0
		\end{align*}
		Therefore, we must have $\delta(t)=0$ for all $t\in [\underline{t}_{\tau},\bar{t}_{\tau}], m$. Hence, $m(t)=\mu c$ and thus,  $\bar{t}_{\tau}-\underline{t}_{\tau}= \frac{1}{\mu c}$, i.e., the time span of the arrival processes are equal to the one in the full information equilibrium.
		We must also have $\mathbf{1}(t\geq \tilde{t}_m(\tau))=0$ for all $t\in [\underline{t}_{\tau},\bar{t}_{\tau}],m,\tau$, which is consistent with assumption (c). 
		Therefore, we must have $\underline{t}_{\tau}=\tau-\frac{1-c}{c \mu}$ and $\bar{t}_{\tau}=\tau+\frac{1}{\mu}$. Hence, $\pi(\cdot|\tau)$ is supported only over the full information equilibrium arrival process and the theorem is proved.

	\optv{arxiv}{
\subsection{Proof of Lemma \ref{obedGPM}}
		If the planner restricts his attention to the set of signaling strategies that satisfy assumptions (b) and (c), we have $q_{\tau,m}(t)=\int_{l=-\infty}^t m(l)\de l-\mu(\tau-t)^+$. Therefore, we have the following for $\bar{c}_{t,\pi}(s)$ and its derivative. 
		\begin{align*}
		\bar{c}&_{t,\pi}(s)= \frac{1}{\mu \bar{m}(t)}\int_m\int_{\tau=\underline{\tau}(t)}^{\bar{\tau}(t)}(\int_{l=-\infty}^s m(l)\de l-\mu(s-\tau)^+ \\&\quad \ +\mu c(\tau-s)+\mu (s-\tau)^+) f_{\tau}(\tau)\pi(m|\tau)m(t) \de \tau    \de m c  \nonumber 
		\\ 
		=&\frac{1}{\mu \bar{m}(t)}\int_m\int_{\tau=\underline{\tau}(t)}^{\bar{\tau}(t)}f_{\tau}(\tau)\pi(m|\tau)m(t)\nonumber
		\\& \hspace{2.2cm}(\int_{l=-\infty}^s m(l)\de l -\mu c s) \ \de \tau    \de m +c\mathbb{E}(\tau|t) \nonumber
		\end{align*}
		\begin{align*}
		\frac{\de}{\de s}\bar{c}&_{t,\pi}(s)
		=\frac{1}{\mu \bar{m}(t)}\int_m\int_{\tau=\underline{\tau}(t)}^{\bar{\tau}(t)}f_{\tau}(\tau)\pi(m|\tau)m(t)\nonumber
		\\& \hspace{3.6cm}( m(s)-\mu c) \ \de \tau    \de m  \nonumber
		\end{align*}
		According to Lemma \ref{obedcon}, if we set $\frac{\de}{\de s}\bar{c}_{t,\pi}(s)|_t=0$ we get the result. 


	\subsection{Proof of Lemma \ref{objGPM}}
		\begin{align*}
		\bar{s}(&\pi)=
		\int_{t}\int_{\tau,m}f_{\tau}(\tau)\pi(m|\tau)m(t)c_{\tau,m}(t) \de \tau \de m \de t\\
		&=\int_{t}\int_{\tau,m}f_{\tau}(\tau)\pi(m|\tau)m(t) \nonumber \\& \hspace{0.7cm}(\frac{q(t)}{\mu}+c(\tau-t)^++(1-c)(t-\tau)^+) \de \tau \de m \de t\\
		&=\int_{t}\int_{\tau,m}f_{\tau}(\tau)\pi(m|\tau)m(t)\nonumber (\frac{\int_{l=-\infty}^tm(l)dl-\mu(t-\tau)^+}{\mu}\\& \hspace{3cm}+c(\tau-t)+(t-\tau)^+)\de \tau \de m \de t\\
		&=\int_{t}\int_{\tau,m}f_{\tau}(\tau)\pi(m|\tau)m(t)(\frac{\int_{l=-\infty}^tm(l)\de l}{\mu}+c(\tau-t))\\& \hspace{7cm} \de \tau \de m \de t\\
		&=\frac{1}{\mu}\int_{\tau}f_{\tau}(\tau)(\int_{t=\underline{t}_{\tau}}^{\bar{t}_{\tau}}\int_{s=\underline{t}_{\tau}}^t(R_{m,\tau}(t,s)-\mu c\bar{m}_{\tau}(t))\de s \de t\nonumber \\&\hspace{3.2cm}+\mu c (\tau-\underline{t}_{\tau}))\de\tau
		\end{align*}
	
\subsection{Proof of Theorem \ref{fulinfores}}
		Suppose $m(t)$ is in the support of $\pi(\cdot|\tau)$.  Assume $\bar{t}_{\tau}-\underline{t}_{\tau}=T$. We show $m(t)$ as $m(t)=\mu c+\delta(t)$. Since we have $\int_{\underline{t}_{\tau}}^{\bar{t}_{\tau}}m(t)\de t=1$, we must have $\int_{\underline{t}_{\tau}}^{\bar{t}_{\tau}}\delta(t)\de t=1-\mu c T\geq 0$. Lemma \ref{obedGPM} results in the following.
		\begin{align*}
	& \int_{\tau,m}\hspace{-0.1cm}f_{\tau}(\tau)\pi(m|\tau)\hspace{-0.1cm}\int_t (\mu^2 c^2+2\mu c \delta(t)+\delta(t)^2)\de t  \de m \de \tau\nonumber=\mu c
		\\& \Rightarrow \mu c (1-
		\mu c T)+\int_{\tau,m}f_{\tau}(\tau)\pi(m|\tau)\int_t \delta(t)^2\de t  \de m \de \tau=0
		\\& \Rightarrow \mathbb{E}[ \int_t \delta^2(t)\de t]=0 \ \Rightarrow \delta(t) =0 \quad wp. \ 1
		\\& \qquad  \mu c T=1 \Rightarrow  T=\frac{1}{\mu c}
		\end{align*}
		Therefore, we have $m(t)=\mu c$ with probability one and $\bar{t}_{\tau}-\underline{t}_{\tau}=\frac{1}{\mu c}$. Therefore, we must have $m(t)$ to be the full information equilibrium, i.e., $\underline{t}_{\tau}=\tau-\frac{1-c}{c \mu}$ and $\bar{t}_{\tau}=\tau+\frac{1}{\mu}$. Therefore, $\pi(\cdot|\tau)$ is supported only over the full information equilibrium arrival process and the  result is proved. 
}	
	

	\bibliographystyle{IEEEtran}

\begin{thebibliography}{10}
\providecommand{\url}[1]{#1}
\csname url@samestyle\endcsname
\providecommand{\newblock}{\relax}
\providecommand{\bibinfo}[2]{#2}
\providecommand{\BIBentrySTDinterwordspacing}{\spaceskip=0pt\relax}
\providecommand{\BIBentryALTinterwordstretchfactor}{4}
\providecommand{\BIBentryALTinterwordspacing}{\spaceskip=\fontdimen2\font plus
\BIBentryALTinterwordstretchfactor\fontdimen3\font minus
  \fontdimen4\font\relax}
\providecommand{\BIBforeignlanguage}[2]{{%
\expandafter\ifx\csname l@#1\endcsname\relax
\typeout{** WARNING: IEEEtran.bst: No hyphenation pattern has been}%
\typeout{** loaded for the language `#1'. Using the pattern for}%
\typeout{** the default language instead.}%
\else
\language=\csname l@#1\endcsname
\fi
#2}}
\providecommand{\BIBdecl}{\relax}
\BIBdecl

\bibitem{HeAn19}
N.~Heydaribeni and A.~Anastasopoulos, ``Linear equilibria for dynamic {LQG}
  games with asymmetric information and dependent types,'' \emph{IEEE
  Conference on Decision and Control (CDC)}, 2019.

\bibitem{heydaribeni2020structured}
------, ``Structured equilibria for dynamic games with asymmetric information
  and dependent types,'' \emph{arXiv preprint arXiv:2009.04253}, 2020.

\bibitem{kamenica2011bayesian}
E.~Kamenica and M.~Gentzkow, ``Bayesian persuasion,'' \emph{American Economic
  Review}, vol. 101, no.~6, pp. 2590--2615, 2011.

\bibitem{bergemann2019information}
D.~Bergemann and S.~Morris, ``Information design: A unified perspective,''
  \emph{Journal of Economic Literature}, vol.~57, no.~1, pp. 44--95, 2019.

\bibitem{bergemann2016bayes}
------, ``Bayes correlated equilibrium and the comparison of information
  structures in games,'' \emph{Theoretical Economics}, vol.~11, no.~2, pp.
  487--522, 2016.

\bibitem{bergemann2016information}
------, ``Information design, bayesian persuasion, and bayes correlated
  equilibrium,'' \emph{American Economic Review}, vol. 106, no.~5, pp. 586--91,
  2016.

\bibitem{das2017reducing}
S.~Das, E.~Kamenica, and R.~Mirka, ``Reducing congestion through information
  design,'' in \emph{2017 55th Annual Allerton Conference on Communication,
  Control, and Computing (Allerton)}.\hskip 1em plus 0.5em minus 0.4em\relax
  IEEE, 2017, pp. 1279--1284.

\bibitem{tavafoghi2017informational}
H.~Tavafoghi and D.~Teneketzis, ``Informational incentives for congestion
  games,'' in \emph{2017 55th Annual Allerton Conference on Communication,
  Control, and Computing (Allerton)}.\hskip 1em plus 0.5em minus 0.4em\relax
  IEEE, 2017, pp. 1285--1292.

\bibitem{zhu2018stability}
Y.~Zhu and K.~Savla, ``On the stability of optimal bayesian persuasion strategy
  under a mistrust dynamics in routing games,'' in \emph{2018 56th Annual
  Allerton Conference on Communication, Control, and Computing
  (Allerton)}.\hskip 1em plus 0.5em minus 0.4em\relax IEEE, 2018, pp. 92--99.

\bibitem{ely2017beeps}
J.~C. Ely, ``Beeps,'' \emph{American Economic Review}, vol. 107, no.~1, pp.
  31--53, 2017.

\bibitem{farhadi2020dynamic}
F.~Farhadi and D.~Teneketzis, ``Dynamic information design: A simple problem on
  optimal sequential information disclosure,'' \emph{Available at SSRN
  3554960}, 2020.

\bibitem{doval2020sequential}
L.~Doval and J.~C. Ely, ``Sequential information design,'' \emph{Econometrica},
  vol.~88, no.~6, pp. 2575--2608, 2020.

\bibitem{lingenbrink2019optimal}
D.~Lingenbrink and K.~Iyer, ``Optimal signaling mechanisms in unobservable
  queues,'' \emph{Operations research}, vol.~67, no.~5, pp. 1397--1416, 2019.

\bibitem{sayin2018dynamic}
M.~O. Sayin and T.~Ba{\c{s}}ar, ``Dynamic information disclosure for
  deception,'' in \emph{2018 IEEE Conference on Decision and Control
  (CDC)}.\hskip 1em plus 0.5em minus 0.4em\relax IEEE, 2018, pp. 1110--1117.

\bibitem{ARXIV-VERSION_HeAn21}
N.~Heydaribeni and A.~Anastasopoulos, ``Joint information and mechanism design
  for queues with heterogeneous users,'' \emph{arXiv preprint}, 2021.

\bibitem{smith1984existence}
M.~J. Smith, ``The existence of a time-dependent equilibrium distribution of
  arrivals at a single bottleneck,'' \emph{Transportation science}, vol.~18,
  no.~4, pp. 385--394, 1984.

\bibitem{daganzo1985uniqueness}
C.~F. Daganzo, ``The uniqueness of a time-dependent equilibrium distribution of
  arrivals at a single bottleneck,'' \emph{Transportation science}, vol.~19,
  no.~1, pp. 29--37, 1985.

\bibitem{zhu2020semidefinite}
Y.~Zhu and K.~Savla, ``A semidefinite approach to information design in
  non-atomic routing games,'' \emph{arXiv preprint arXiv:2005.03000}, 2020.

\bibitem{lasserre2008semidefinite}
J.~B. Lasserre, ``A semidefinite programming approach to the generalized
  problem of moments,'' \emph{Mathematical Programming}, vol. 112, no.~1, pp.
  65--92, 2008.

\bibitem{henrion2009gloptipoly}
D.~Henrion, J.-B. Lasserre, and J.~L{\"o}fberg, ``Gloptipoly 3: moments,
  optimization and semidefinite programming,'' \emph{Optimization Methods \&
  Software}, vol.~24, no. 4-5, pp. 761--779, 2009.

\end{thebibliography}
	\optv{arxiv}{

}
		\optv{submission}{\input{root_bib.bbl}}

\end{document}